\newtheorem{theorem}{Theorem}
\newtheorem{example}{Example}
\newtheorem{corollary}[theorem]{Corollary}
\newtheorem{lemma}[theorem]{Lemma}
\renewcommand{\footnoterule}{%
	\kern-3pt                         
	\hrule width 0.25\columnwidth height 0.4pt   
	\kern2.6pt                          
}
\begin{document}
	%
	\title{Random generation of group elements using\\ combinatorial group theory and automata theory, along with a hardware example\\}
	%
	%
	%
	
	\author{MohammadJavad~Vaez,
		Marjan~Kaedi,
		and~Mahdi~Kalbasi
		\thanks{M. Vaez is a master student in the School of Mathematics, Statistics, and Computer Science, College of Science, University of Tehran, Tehran, Iran.
		\newline
		e-mail: (\texttt{mohammadjavadvaez@gmail.com}).}
		\thanks{M. Kaedi (\texttt{kaedi@eng.ui.ac.ir}) and M. Kalbasi (\texttt{mahdikalbasi@gmail.com}) are professors in Faculty of Computer Engineering, University of Isfahan, Isfahan, Iran.}
	}

	\maketitle
	
	\begin{abstract}
		In this paper, we introduce a novel approach for generating random elements of a finite group given a set of generators of that. Our method draws upon combinatorial group theory and automata theory to achieve this objective. Furthermore, we explore the application of this method in generating random elements of a particularly significant group, namely the symmetric group (or group of permutations on a set). Through rigorous analysis, we demonstrate that our proposed method requires fewer average swaps to generate permutations compared to existing approaches. However, recognizing the need for practical applications, we propose a hardware-based implementation based on our theoretical approach, and provide a comprehensive comparison with previous methods. Our evaluation reveals that our method outperforms existing approaches in certain scenarios. Although our primary proposed method only aims to speed up the shuffling and does not decrease its time complexity, we also extend our method to improve the time complexity.
	\end{abstract}
	
	\begin{IEEEkeywords}
		Permutation generation, Fisher-Yates shuffle, Knuth Shuffle, Combinatorial group theory, Automata theory, Probabilistic automata.
	\end{IEEEkeywords}

	%
	\IEEEpeerreviewmaketitle

	\section{Introduction}
	%
	%
	%
	%
	\IEEEPARstart{A} permutation is a bijection from one set to itself. Roughly speaking, it is a rearrangement or shuffling of a set of elements. Generating random permutations has diverse applications across different branches of computer science, such as cybersecurity (including cryptography \cite{Andreeva2015-di}, image encryption \cite{Wang2021-vz}, biometric template security \cite{Punithavathi2021-sg}, secure machine learning \cite{Zheng2022-im}), randomized algorithms \cite{Kao1996-yo,Motwani1995-ww}, Monte Carlo simulation and randomization tests \cite{Hemerik2018-ue,Berry2014-hp,Li2013-la,Manly2018-uf}, machine learning \cite{Mishchenko2020-ei,Zheng2022-im}, and other miscellaneous algorithms \cite{Gan2010-ya}. The extensive range of these applications motivates the search for faster RPG methods.
	
	Arguably, the most well-known algorithm for this purpose is the Fisher-Yates algorithm. Ronald A. Fisher and Frank Yates introduced one of the first algorithms for random permutation generation (RPG) having $O(n^2)$ time complexity and $O(n)$ space complexity \cite{Fisher1938-aq}. Some decades later, an improved algorithm was introduced by Richard Durstenfeld, which had $O(n)$ time complexity and $O(1)$ space complexity \cite{Durstenfeld1964-yr}. This algorithm was popularized after being introduced in Knuth's \textit{The Art of Computer Programming}. Knuth attributes this algorithm to Fisher and Yates, and its computer implementation to Durstenfeld \cite{Knuth1998-mz}. However, according to \textit{A Historical Note on Shuffle Algorithms}, the Durstenfeld algorithm was a new RPG algorithm when introduced in 1964 \cite{OConnor2014-bq}. This historical point is the reason why Durstenfeld algorithm is sometimes called Fisher-Yates shuffle \cite{Arndt2010-nt} and sometimes Knuth shuffle \cite{Odom2019-co}. Here, we follow this misnomer and use the term \textquotedblleft Fisher-Yates algorithm\textquotedblright~to refer to the algorithm introduced by Durstenfeld! The pseudocode of this algorithm for a zero-based array $A$, is as follows \cite{OConnor2014-bq}:
	
	\begin{algorithm}[H]
		\caption{Descending Fisher-Yates Shuffling Algorithm}
		\begin{algorithmic}
			\For{$i \gets n-1$ \textbf{downto} $1$}
			\State Choose a random number $j$ from $\{0, 1, \ldots, i\}$
			\State Swap $A[i]$ and $A[j]$
			\EndFor
		\end{algorithmic}
	\end{algorithm}

	There is also an equivalent ascending version of this algorithm \cite{Arndt2010-nt}:
	
	\begin{algorithm}[H]
		\caption{Ascending Fisher-Yates Shuffling Algorithm}
		\begin{algorithmic}
			\For{$i \gets 0$ \textbf{to} $n-2$}
			\State Choose a random number $j$ from $\{i, i+1, \ldots, n-1\}$
			\State Swap $A[i]$ and $A[j]$
			\EndFor
		\end{algorithmic}
	\end{algorithm}

	It is usually important to consider shuffling the array $A=\{0,1,\dots,n-1\}$; since permutations of this set can be easily extended to any n-element array by a bijection \cite{Odom2019-co}. That is why Knuth also suggested a modification when we just want a random permutation of the integers $\{1,2,\dots,n\}$ in order to avoid swapping \cite{Knuth1998-mz}. Yet it still needs a for loop from $0$ to $n-2$. The hardware corresponding to this algorithm has been implemented and evaluated too \cite{Odom2019-co}, and the number of clock cycles it needs is a multiple of $n-1$ in different implementations. However, we will see later that the expected number of swaps required to generate a random permutation of n elements is $n-H_n$, and we present a new randomized method that generates permutations with this number of swaps.
	
	This paper begins by presenting proofs of combinatorial, algebraic, and probabilistic facts about permutation groups. Next, we introduce an accelerated hardware method for shuffling. Finally, we extend our method to enhance the time complexity of RPG.
	
	\hfill
	
	\section{Mathematical background}
	Due to the diverse insights covered in this paper, providing an exhaustive introduction to all the necessary background mathematics would be digressive. Hence, we present essential facts from combinatorial group theory and probabilistic automata, sourced from \cite{Magnus2004-in,Vidal2005-xv}. For readers who are unfamiliar with the basic concepts of group theory, particularly symmetric groups, and automata theory, we recommend referring to \cite{Malik1997-mj} and \cite{Linz2017-vb}, respectively.
	\subsection{Combinatorial Group Theory}
\begin{mdframed}[
	backgroundcolor=gray!20, 
	leftline=true, 
	linecolor=gray!20, 
	linewidth=4pt 
	]
	\begin{quote}
		\textbf{Definition.} Let $a, b, c, \ldots$ be distinct symbols and form the new symbols $a^{-1}, b^{-1}, c^{-1}, \ldots$. A word $W$ in the symbols $a, b, c, \ldots$ is a finite sequence $f_1, f_2, \ldots, f_{n-1}, f_n$, where each of the $f_\nu$ is one of the symbols $a, b, c, \ldots, a^{-1}, b^{-1}, c^{-1}, \ldots$.
		The length $L(W)$ of $W$ is the integer $n$. For convenience, we introduce the empty word of length zero and denote it by $1$. If we wish to exhibit the symbols involved in $W$, we write $W(a, b, c, \ldots)$.
		
		It is customary to write the sequence $f_1, f_2, \ldots, f_{n-1}, f_n$ without the commas$\ldots$.
		The inverse $W^{-1}$ of a word $W = f_1 f_2 \ldots f_{n-1} f_n$ is the word $f_n^{-1} f_{n-1}^{-1} \ldots f_2^{-1} f_1^{-1}$, where if $f_\nu$ is $a$ or $a^{-1}$, then $f_\nu^{-1}$ is $a^{-1}$ or $a$, respectively. Similarly, if $f_\nu$ is one of the symbols $b$ or $b^{-1}$, $c$ or $c^{-1}$, $\ldots$, the inverse is obtained by taking the inverse of the symbol. The inverse of the empty word is itself.
		
		$\ldots$
		
		If $W$ is the word $f_1 f_2 \ldots f_n$ and $U$ is the word $f_1' f_2' \ldots f_r'$, then we define their juxtaposed product $WU$ as the word $f_1 f_2 \ldots f_n f_1' f_2' \ldots f_r'$
		
		$\ldots$
		
		Given a mapping $\alpha$ of the symbols $a, b, c, \ldots$ into a group $G$ with $\alpha(a) = g, \alpha(b) = h, \alpha(c) = k, \ldots$, then we say that (under $\alpha$) $a$ defines $g$, $b$ defines $h$, $c$ defines $k$, $\ldots$, $a^{-1}$ defines $g^{-1}$, $b^{-1}$ defines $h^{-1}$, $c^{-1}$ defines $k^{-1}$, $\ldots$. Moreover, if $W = f_1 f_2 \ldots f_{n-1} f_n$, then $W$ defines the element, denoted $W(g, h, k, \ldots)$, in $G$ given by $g_1 g_2 \ldots g_{n-1} g_n$ where $f_\nu$ defines $g_\nu$; the empty word $1$ defines the identity element $1$ of $G$.
		
		Clearly, if the words $U$ and $V$ define the elements $p$ and $q$ of $G$, then $U^{-1}$ defines $p^{-1}$ and $UV$ defines $pq$ \cite{Magnus2004-in}.
	\end{quote}
\end{mdframed}
	Given a group $G$ and a set of words defining the elements of $G$, we can introduce an equivalence relation between words in this way:
	\begin{equation} \label{eq1}
		W_1 \sim W_2
	\end{equation}
	if they define the same element in $G$ \cite{Magnus2004-in}. For example, let $G$ be the symmetric group $S_3$ and let $\alpha$ be the mapping $a \mapsto (1,2)$, $b \mapsto (1,3)$, $c \mapsto (2,3)$. Then $ab \sim ca$ because both $ab$ and $ca$ define the permutation $(1,3,2)$.
\begin{mdframed}[
	backgroundcolor=gray!20, 
	leftline=true, 
	linecolor=gray!20, 
	linewidth=4pt 
	]
	\begin{quote}
		The class of all words in $a, b, c, \ldots$ equivalent to $W$ will be denoted by $\{W\}$, and $W$ or any other word contained in $\{W\}$ will be called a representative of $\{W\}$. We introduce multiplication of equivalence classes by:
		
		\begin{equation} \label{eq2}
			\{W_1\} \cdot \{W_2\} = \{W_1 W_2\}
		\end{equation}
		...
		
		\begin{theorem}
			The set $G$ of equivalence classes of words in $a, b, c, \ldots$ defined by the relation $\sim$ in \eqref{eq1} is a group under the multiplication defined by \eqref{eq2} \cite{Magnus2004-in}.
		\end{theorem}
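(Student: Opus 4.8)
The plan is to transport the group structure of $G$ along the ``evaluation'' map rather than to check the axioms from scratch. Write $\widehat G$ for the set of equivalence classes under $\sim$ (I use $\widehat G$ to keep it typographically distinct from the group $G$, although the statement reuses the letter $G$). Define $\Phi\colon \widehat G \to G$ by letting $\Phi(\{W\})$ be the element $W(g,h,k,\ldots)$ of $G$ that $W$ defines under $\alpha$. First I would observe that $\Phi$ is \emph{well defined}: by the very definition of $\sim$ in \eqref{eq1}, any two representatives of $\{W\}$ define the same element of $G$. The same definition, read in the other direction, shows $\Phi$ is \emph{injective}: if two words define the same element of $G$ they are $\sim$-equivalent, hence determine the same class. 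Finally, using the last two sentences of the boxed Definition (if $U,V$ define $p,q$ then $U^{-1}$ defines $p^{-1}$ and $UV$ defines $pq$, and the empty word defines $1\in G$), the image of $\Phi$ contains $1$, contains $g,h,k,\ldots$ (the elements defined by the one-letter words $a,b,c,\ldots$), and is closed under products and inverses; hence $\operatorname{im}\Phi$ is exactly the subgroup $H=\langle g,h,k,\ldots\rangle$ of $G$. So $\Phi$ is a bijection of $\widehat G$ onto the group $H$.

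Next I would check that the multiplication \eqref{eq2} on $\widehat G$ is well defined and is intertwined by $\Phi$ with the multiplication of $H$. Suppose $W_1\sim W_1'$ and $W_2\sim W_2'$, and let $p=\Phi(\{W_1\})=\Phi(\{W_1'\})$, $q=\Phi(\{W_2\})=\Phi(\{W_2'\})$. By the juxtaposition fact quoted in the Definition, $W_1W_2$ defines $pq$ and $W_1'W_2'$ also defines $pq$; therefore $W_1W_2\sim W_1'W_2'$, so $\{W_1W_2\}=\{W_1'W_2'\}$ and the product of classes does not depend on the chosen representatives. The same computation shows $\Phi(\{W_1\}\cdot\{W_2\})=\Phi(\{W_1W_2\})=pq=\Phi(\{W_1\})\,\Phi(\{W_2\})$, i.e.\ $\Phi$ carries $\cdot$ to the group operation of $H$.

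From here the conclusion is immediate: since $H\le G$ is a group and $\Phi\colon(\widehat G,\cdot)\to H$ is a bijection compatible with the operations, $(\widehat G,\cdot)$ satisfies all the group axioms by pulling them back through $\Phi$. Explicitly, associativity on $\widehat G$ follows from associativity in $H$; the class $\{1\}$ of the empty word is a two-sided identity because $\Phi(\{1\})=1_G$; and $\{W^{-1}\}$ is a two-sided inverse of $\{W\}$ because $\Phi(\{W^{-1}\})=\Phi(\{W\})^{-1}$. (If one prefers not to introduce $\Phi$ at all, these four statements can be verified directly by the same two facts — compatibility of word juxtaposition and word inversion with the operations of $G$ — together with the well-definedness check above.)

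I do not expect a genuine obstacle here: the entire content is the well-definedness of \eqref{eq2}, which is a one-line consequence of the clause ``$UV$ defines $pq$'' in the Definition, and everything else is a routine pull-back of the structure of $G$. The only point requiring care is notational discipline — never conflating a word with the element of $G$ it defines, and invoking the boxed compatibility facts precisely at the steps where words are multiplied or inverted.
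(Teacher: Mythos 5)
Your proof is correct. Note that the paper itself does not prove this statement at all: it is quoted verbatim from the cited reference (Magnus--Karrass--Solitar) as background, so there is no in-paper argument to compare against. Your transport-of-structure route is sound and is made especially clean by the particular way $\sim$ is defined here: since $W_1\sim W_2$ means precisely that the two words define the same element of $G$, the classes are literally the fibers of the evaluation map, so your $\Phi$ is tautologically well defined and injective, and the only substantive step is the well-definedness of \eqref{eq2}, which you correctly reduce to the clause ``$UV$ defines $pq$.'' The standard textbook proof instead verifies the axioms directly on $\widehat G$: well-definedness of the product as you do, then associativity, identity $\{1\}$, and inverse $\{W^{-1}\}$. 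The only thing the direct route buys is that associativity comes for free at the level of words --- $(W_1W_2)W_3$ and $W_1(W_2W_3)$ are the \emph{same} word, so one never needs to invoke associativity in $G$ --- whereas your argument pulls it back through $\Phi$ from the subgroup $H=\langle g,h,k,\ldots\rangle$. Conversely, your version yields as a byproduct the isomorphism $\widehat G\cong H$, which is the fact the paper actually uses later when it identifies the word-class group with $S_n$. Either way the content is identical and your argument has no gaps.
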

		Let $a_1, a_2, \ldots, a_n$ be the generators of group $G$.
		Define an order relation $<$ among the words $W(a_1, a_2, \ldots, a_n)$ as follows:
		
		If $L(W_1) < L(W_2)$, then $W_1 < W_2$;
		
		$a_1 < a_1^{-1} < a_2 < a_2^{-1} < \ldots < a_n < a_n^{-1}$
		
		If $L(W_1) = L(W_2)$ and $W_1$ and $W_2$ first differ in their $k$-th terms, then order $W_1$ and $W_2$ according to their $k$-th terms. For example,
		$1 < a_1 < a_2 a_n < a_2 a_n^{-1} < a_1^3$ \cite{Magnus2004-in}.
	\end{quote}
\end{mdframed}

	If we select a unique representative from each equivalence class of words, we call that a canonical form. One method for presenting the group $G$ as a set of canonical forms is to choose the \textquotedblleft least\textquotedblright~element in each equivalence class \cite{Magnus2004-in}. In this paper, we call this set \textquotedblleft standard representative system.\textquotedblright
	
	So far, we have seen that a group can be represented as a set of words (strings). In order to randomly generate the elements of a group, we must assign the same probability to them (or equivalently, we must generate a uniform distribution over the group). A well-known tool to generate distributions over sets of (possible infinite cardinality) words is a probabilistic finite-state automaton (PFA) \cite{Vidal2005-xv}. Here, we will just have a cursory look at this tool and refer the interested readers to \cite{Vidal2005-xv} for further details.
	
	\subsection{Probabilistic Automata}
	The following part is taken from \cite{Vidal2005-xv}.
\begin{mdframed}[
	backgroundcolor=gray!20, 
	leftline=true, 
	linecolor=gray!20, 
	linewidth=4pt 
	]
	\begin{quote}
		\textbf{Definition.} A PFA is a tuple $\mathcal{A}=\langle Q_\mathcal{A},\Sigma,\delta_\mathcal{A},I_\mathcal{A},F_\mathcal{A},P_\mathcal{A}\rangle$ where:
		\begin{itemize}
			\item $Q_\mathcal{A}$ is a finite set of states;
			\item $\Sigma$ is the alphabet;
			\item $\delta_\mathcal{A}\subseteq Q_\mathcal{A}×\Sigma×Q_\mathcal{A}$  is a set of transitions;
			\item $I_\mathcal{A}:Q_\mathcal{A}\longrightarrow \mathbb{R}^{\geq0}$ (initial-state probabilities);
			\item $P_\mathcal{A}:\delta_\mathcal{A}\longrightarrow\mathbb{R}^{\geq0}$ (transition probabilities);
			\item $F_\mathcal{A}: Q_\mathcal{A}\longrightarrow\mathbb{R}^{\geq0}$ (final-state probabilities);
		\end{itemize}
		$I_\mathcal{A},P_\mathcal{A}$, and $F_\mathcal{A}$ are functions such that:
		\[\sum_{q\in Q_\mathcal{A}}^{}I_\mathcal{A}=1,\]
		and
		\[\forall q\in Q_\mathcal{A},F_\mathcal{A}(q)+\sum_{a\in \Sigma,q'\in Q_\mathcal{A}}^{}P_\mathcal{A}(q,a,q')=1\]
		$P_\mathcal{A}$ is assumed to be extended with $P_\mathcal{A}(q,a,q')=0$ for all $(q,a,q')\notin\delta_\mathcal{A}$.
		In what follows, the subscript $\mathcal{A}$ will be dropped when there is no ambiguity.
		
		$\ldots$
		
		\textbf{Definition.} A PFA $\mathcal{A}=\langle Q,\Sigma,\delta,I,F,P\rangle$ is a DPFA, if:
		\begin{itemize}
			\item $\exists q_0\in Q$ (initial state), such that $I(q_0)=1$;
			\item $\forall q\in Q,\forall a\in\Sigma,|\left\{q':(q,a,q')\in\delta\right\}|\leq1$.
		\end{itemize}
		In a DPFA, a transition $(q,a,q')$ is completely defined by $q$ and $a$ and a DPFA can be more simply denoted by $\langle Q,\Sigma,\delta,q_0,F,P\rangle$.
		
		$\ldots$
		
		PFA are stochastic machines that may not generate a probability space but a subprobability space over the set of finite-strings $\Sigma^*$. Given a PFA $\mathcal{A}$, the process of generating a string proceeds as follows:
		\begin{itemize}
			\item Initialization: Choose (with respect to a distribution $I$) one state $q_0$ in $Q$ as the initial state. Define $q_0$ as the current state.
			\item Generation: Let $q$ be the current state. Decide whether to stop, with probability $F(q)$, or to produce a move $(q,a,q')$ with probability $P(q,a,q')$, where $a\in\Sigma$ and $q'\in Q$. Output $a$ and set the current state to $q'$.
		\end{itemize}
	If PFA generates finite-length strings, a relevant question is that of computing the probability that a PFA $\mathcal{A}$ generates a string $x \in \Sigma^*$. To deal with this problem, let $\theta=(s_0,x_1',s_1,x_2',\ldots,s_{k-1},x_k',s_k)$ be a path for $x$ in $\mathcal{A}$; that is, there is a sequence of transitions $(s_0,x_1',s_1), (s_1,x_2',s_2), \ldots, (s_{k-1},x_k',s_k) \in \delta$ such that $x=x_1' x_2'\ldots x_k'$. The probability of generating such a path is:
	\[\Pr_\mathcal{A}(\theta)=I(s_0)\cdot\left(\prod_{j=1}^k P(s_{j-1},x_j',s_j)\right)\cdot F(s_k)\].
	
	\textbf{Definition.} A valid path in a PFA $\mathcal{A}$ is a path for some $x \in \Sigma^*$ with probability greater than zero. The set of valid paths in $\mathcal{A}$ will be denoted as $\Theta_\mathcal{A}$.
	
	\textbf{Definition.} A state of a PFA $\mathcal{A}$ is useful if it appears in at least one valid path of $\Theta_\mathcal{A}$.
	
	\textbf{Proposition.} A PFA is consistent if all its states are useful \cite{Vidal2005-xv}.
	\end{quote}
\end{mdframed}
	\textbf{Definition.} In a similar manner, a useful state in a deterministic finite automaton (DFA) refers to a state that is reachable from the initial state and can eventually lead to an accepting state. Conversely, a state that cannot fulfill these criteria is termed useless and can be eliminated from the DFA without impacting its functionality.

	It is worth mentioning that algebraic insights have been widely used when dealing with permutations and permutation puzzles \cite{Joyner2008-gs}. Furthermore, there is a strong connection between algebraic structures (especially semigroups and groups) and automata theory \cite{Holcombe2004-lb, Godin2017-rd}. In this paper, we combined these branches to introduce a new method. In fact, we have used 5 different insights interchangeably. They have been shown in table \ref{tab:insights}, and we will explain them in the sequel.
	
	\begin{table}[h]
		\centering
		\caption{Different insights used in this paper interchangeably}
		\label{tab:insights}
		\begin{tabular}{|l|l|l|l|l|}
			\hline
			\textbf{Algebraic insight} & \textbf{Language-theoretic insight} & \textbf{Automatic insight}$^*$ & \textbf{Machinelike insight}$^\dag$ & \textbf{Graphical insight} \\ 
			\hline
			group                     & language                            & DFA                       & finite-state machine (FSM)                          & directed graph            \\ 
			\hline
			group element             & word                                & input                     & final output                 & path starting from the initial state \\ 
			\hline
			-                         & -                                    & state                     & state                        & node                       \\ 
			\hline
			generator                 & symbol                              & transition                & action (output)              & edge                       \\ 
			\hline
		\end{tabular}
		\\[8pt]  
		{\footnotesize $^*$One of the meanings of \textquotedblleft automatic\textquotedblright~is automaton-like \cite{Harington1897-oh}. In the context of algebra, it can also mean related to an automaton \cite{Gradel2020-ur} or having one or more finite-state automata \cite{WikiDiff2018-hy}. $^\dag$ For more information about the machinelike insight and additional topics concerning modeling and implementation of FSMs, please refer to \cite{Lee2017-rl}. As we will see later, the automatic and machinelike insights are inverse of each other. But here, our purpose is to generate all permutations of group $S_n$ with the same probability. Since each group is closed with respect to inversion, the output of the FSM constructs the permutations too.}
	\end{table}
	
	\section{Proposed Method}
	The main idea of this paper is made up of 4 steps:
	\begin{enumerate}
		\item presenting the symmetric group $S_n$ as a language called $L_n$
		\item obtaining the minimal DFA of language $L_n$
		\item calculate the probability of each transition in order to generate all permutations equally likely
		\item designing a piece of hardware for shuffling
	\end{enumerate}
	We explain each step through an example.
	
	\begin{example}
		Consider the symmetric group $S_4$ containing all possible permutations on a 4-element set.
		
		\textbf{Step 1)} \label{step1} Here we have decomposed all permutations into transpositions (except the identity permutation which we do not need to factorize).
		
		Let $\alpha$ be the mapping $a \mapsto (1,2)$, $b \mapsto (1,3)$, $c \mapsto (2,3)$, $d \mapsto (1,4)$, $e \mapsto (2,4)$, $f \mapsto (3,4)$.\footnote{In some books and papers, cycles are written without comma.} Then the group $S_4$ under $\alpha$ will be presented as follows:
		\begin{align*}
			S_4 = \{ & 1, a, b, c, d, e, f, (1,2,3)=ba, (1,3,2)=ab, (1,2,4)=da, (1,4,2)=ad, (1,3,4)=db,(1,4,3)=bd,(2,3,4)=ec, \\
			& (2,4,3)=ce,(1,2)(3,4)=af,(1,3)(2,4)=be,(1,4)(2,3)=dc, (1,2,3,4)=dba,(1,2,4,3)=bda,\\
			& (1,3,2,4)=dab,(1,3,4,2)=adb,(1,4,2,3)=bad,(1,4,3,2)=abd \}.
		\end{align*}
		This is one of many possible presentations of group $S_4$. To obtain this for each disjoint cycle we used the fact that $(i_1,i_2,\ldots,i_k)=(i_1,i_k)(i_1,i_{k-1})\ldots(i_1,i_2)$. For example $(1,2,3,4)=(1,4)(1,3)(1,2)=dba$. However other presentations are accepted too.

		Here we have presented $S_4$ as if it is a language whose alphabet is the set of transpositions so that we can obtain an automaton for it.
		
		\textbf{Step 2)} The minimal DFA for such a language is depicted in figure \ref{fig:first_S4}.
		\begin{figure}[h]
			\centering
			\includegraphics[scale=0.7]{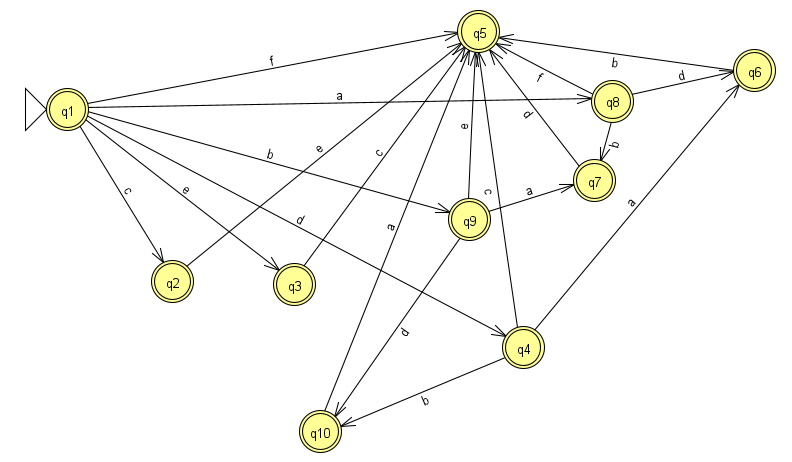}
			\caption{the DFA corresponding to the group $S_4$ (first presentation)}
			\label{fig:first_S4}
		\end{figure}

		\textbf{Step 3)} Now, we assign a probability to each transition. These probabilities must be calculated in such a way that all permutations are generated equally likely. Theorem \ref{prob_thm} will help us satisfy this condition.
		
		Note that from this stage onwards, we will use the opposite insight of step \ref{step1}. In the first step, there was an acceptor which would take a word as an input and move between states step by step. In each step, it would consume one symbol from the beginning of the word. Here, however, there is a machine that moves between states and applies a transposition to an array. Using language-theoretic insight, in each step, it produces one symbol and places it at the beginning of a word. So the ultimate output of this machine is a word. Hence, the set of words produced by the DPFA is equal to ${L_4}^{-1}$. Since $L_4$ presents $S_4$, ${L_4}^{-1}$ presents ${S_4}^{-1}$ which is equal to $S_4$~\footnote{For a set $A$, we define $A^{-1}=\left\{a^{-1}|a\in A\right\}$. Of course, the inversion is inherently different in the groups and among words.}.  For instance, $abd$ is a path in figure \ref{fig:first_S4}. Then $dba$ is the result of corresponding actions, since applying $a, b,$ and $d$ consecutively, constructs a composite function $d(b(a(1)))=dba$.
		
		\textbf{Step 4)} The last step is to map the DPFA to a piece of hardware.
		
		We will explain steps 3 and 4 further later.
		
		Although the DFA shown in figure \ref{fig:first_S4} is minimal, there could be fewer number of states using another presentation for group $S_4$. For example, the standard representative system of group $S_4$ under mapping $\alpha$ is as follows:
		\begin{align*}
			S_4 = \{ & 1, a, b, c, d, e, f, (1,2,3)=ac, (1,3,2)=ab, (1,2,4)=ae, (1,4,2)=ad, (1,3,4)=bf,(1,4,3)=bd,(2,3,4)=cf,\\
			& (2,4,3)=ce,(1,2)(3,4)=af,(1,3)(2,4)=be,(1,4)(2,3)=cd, (1,2,3,4)=acf,(1,2,4,3)=ace,\\
			& (1,3,2,4)=abe,(1,3,4,2)=abf,(1,4,2,3)=acd,(1,4,3,2)=abd \}.
		\end{align*}
		The minimal DFA for this presentation is shown in figure \ref{fig:second_S4}. As you can see, it has fewer states. It also has a more organized structure which we will discuss in the following theorem.
		
		\begin{figure}[h]
			\centering
			\includegraphics[scale=0.45]{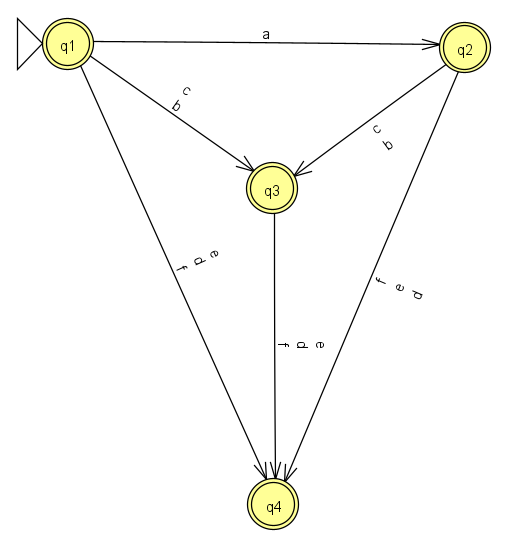}
			\caption{the DFA corresponding to the group $S_4$ (standard representative system)}
			\label{fig:second_S4}
		\end{figure}
	\end{example}
	
	\subsection{Theorems and Corollaries}
	\begin{theorem} \label{minimal_DFA}
		A minimal DFA of group $S_n$ is of the form $M_n=(Q,\Sigma,\delta,q_1,F)$;\footnote{It is more common to correspond the alphabet to symbols like $a,b,c,\ldots$ or $a_1,a_2,a_3,\ldots$. However, here we have used the transpositions for convenience.} \footnote{It is better to assume $n>1$ in order not to have an empty alphabet.} where
		\begin{align*}
			Q & = \{q_1, q_2, \ldots, q_n, q_{n+1}\} \\
			F & = \{q_1, q_2, \ldots, q_n\} = Q \setminus \{q_{n+1}\} \\
			\Sigma & = \{(i,j) | 1 \leq i < j \leq n\} \\
			\delta(q_i, (j,k)) & = 
			\begin{cases} 
				q_{\max\{j,k\}} & \text{if } k > i \\
				q_{n+1} & \text{if } k \leq i 
			\end{cases}
		\end{align*}
		
		In other words, it has the following properties:
		\begin{itemize}
			\item It has $n+1$ states, and all of them are final states except the last one, which is the trap state. We usually neglect the trap state and the transitions ending to that.
			\item If $i<j$, there are $j-1$ transitions from $q_i$ to $q_j$ corresponding to the transpositions $(x,j)$ where $1 \leq x \leq j-1$.
		\end{itemize}
		This DFA is unique up to isomorphism; i.e., we will have another minimal DFA by relabeling the numbers. However, for the sake of simplicity, we just work with this standard form and prove the following theorems based on that.
	\end{theorem}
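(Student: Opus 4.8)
The plan has four parts: identify the language $L_n$ that the standard representative system determines, exhibit the claimed automaton $M_n$ and check it recognizes $L_n$, prove $M_n$ is minimal, and deduce uniqueness up to isomorphism.

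\emph{Identifying $L_n$.} I would first show that $L_n$ is exactly the set of words $(x_1,m_1)(x_2,m_2)\cdots(x_r,m_r)$ with $r\ge 0$, each $1\le x_i<m_i\le n$, and $m_1<m_2<\cdots<m_r$ (the case $r=0$ being the empty word). These form a complete set of class representatives by a standard induction: if $\pi\neq 1$, let $m$ be the largest point moved by $\pi$ and $x=\pi^{-1}(m)$, so $x<m$ since every point above $m$ is fixed; then $\pi\cdot(x,m)$ fixes every point $\ge m$, hence by induction has a unique such form with all larger indices $<m$, and appending the letter $(x,m)$ yields the unique such form for $\pi$. A count shows there are $\prod_{k=2}^{n}k=n!=|S_n|$ of these words, so there is exactly one per class. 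That each is the $<$-least word in its class splits into two points. First, it has minimal length: multiplying the running product by $(x_i,m_i)$ always merges the cycle $\{m_i\}$ --- still a fixed point, because $x_j<m_j<m_i$ for $j<i$ means $m_i$ has not yet occurred --- into the cycle containing $x_i$, so the number of cycles drops by exactly one at each of the $r$ steps; thus $r=n-c(\pi)$, the well-known minimum number of transpositions needed to express $\pi$. Second, among equivalent words of that minimal length it is lexicographically least, which I would get by cancelling the (necessarily common) prefix and checking that the increasing-index form starts with the $<$-smallest first letter any shortest factorization can have. Together these give $L_n=\{(x_1,m_1)\cdots(x_r,m_r):m_1<\cdots<m_r\}$.

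\emph{Recognizing $L_n$ and minimality.} Take $M_n$ verbatim from the statement. Since $\delta(q_i,(j,k))=q_{\max\{j,k\}}=q_k$ when $k>i$ and $=q_{n+1}$ (absorbing) otherwise, running $M_n$ on $(j_1,k_1)\cdots(j_\ell,k_\ell)$ leaves it in $q_{k_t}$ after $t$ letters precisely when $1<k_1<\cdots<k_t$, and in the trap otherwise; as every non-trap state is final, $M_n$ accepts a word iff its larger indices strictly increase, i.e.\ iff it lies in $L_n$. For minimality, every state is reachable ($q_1$ is initial, $q_k$ is reached from $q_1$ via $(1,k)$ for $2\le k\le n$, and $q_{n+1}$ from $q_2$ via $(1,2)$), and the states are pairwise Myhill--Nerode inequivalent: $q_{n+1}$ is the only non-final state, so the empty string separates it from all others; and for $1\le i<j\le n$ the letter $(i,i+1)$ is accepted from $q_i$ --- it sends $q_i$ to $q_{i+1}\in F$ since $i+1>i$ --- but rejected from $q_j$, since $i+1\le j$ sends $q_j$ to $q_{n+1}$. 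Hence a DFA for $L_n$ needs at least $n+1$ states and $M_n$ achieves this bound, so $M_n$ is a minimal DFA. Finally, any relabeling of $\{1,\dots,n\}$ permutes $\Sigma$ and carries one such presentation to another, inducing an automaton isomorphism, and since the minimal DFA of a language is unique up to isomorphism, the stated normal form is unique up to isomorphism.

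\emph{Where the difficulty is.} The only substantive step is identifying $L_n$ with the strictly-increasing-larger-index transposition words; after that the automaton-theoretic part is bookkeeping. Within that step, the ``complete set of representatives'' claim is a clean induction and the ``minimal length'' claim a clean cycle count, but the genuinely fiddly point is lexicographic minimality among equal-length equivalent words --- verifying that the particular word order defined in the excerpt really picks out the increasing-index factorization and not some other shortest one.
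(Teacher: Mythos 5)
Your proposal is correct and takes essentially the same route as the paper: both establish a bijection between the accepted words and $S_n$ via the decomposition of a permutation by its largest moved point (the paper runs this as an induction on $n$ through the coset partition $L_{n+1}=S_n\sqcup\bigsqcup_{i}S_n(i,n+1)$ together with a cardinality count, you run it as an induction peeling the last letter off a permutation together with the same count), and both then prove minimality by reachability plus pairwise distinguishability using a single transposition as the separating word. The only divergence is that your material on minimal length and lexicographic minimality is not needed for this theorem --- it is precisely the content of the paper's Theorem \ref{remarks} and Corollary \ref{minimum_length}, which are stated and proved separately.
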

	\begin{proof}
		The proof has three parts.
		\begin{itemize}
			\item The first part is to show that the language accepted by the DFA defined by this theorem, corresponds to the symmetric group.
			\item The second part is to show that there are no two different words defining the same permutation.\footnote{The second part is essential to prove that each permutation is generated just once.}
			\item The third part is to show that the DFA explained in the theorem is minimal.
		\end{itemize}
		
		Before proving the theorem, we give an example for $n=3$. If $n=3$, the minimal DFA is isomorphic to the DFA shown in figure \ref{fig:S3}, which has 3 states:
		\begin{figure}[h]
			\centering
			\includegraphics[scale=0.23]{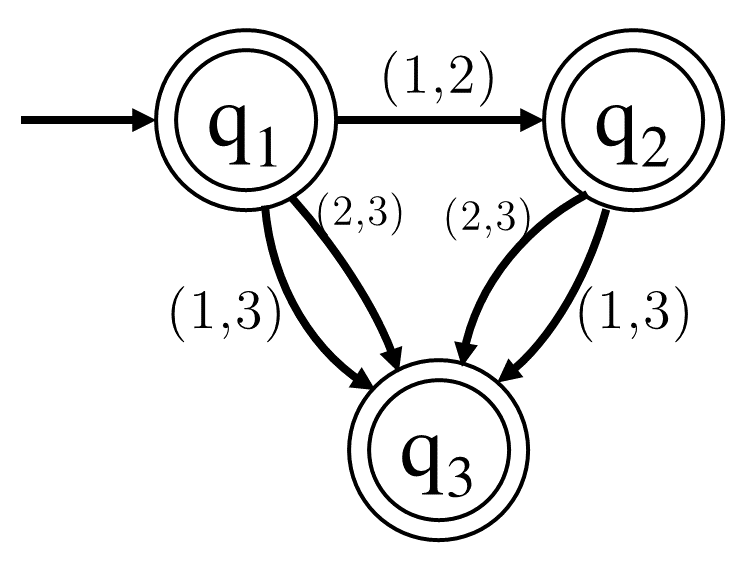}
			\caption{DFA corresponding to the group $S_3$ constructed based on theorem \ref{minimal_DFA}}
			\label{fig:S3}
		\end{figure}
		
		Then
		\begin{align*}
			\delta(q_1, (1,2)) & = q_2, \\
			\delta(q_1, (1,3)) & = \delta(q_1, (2,3)) = q_3, \\
			\delta(q_2, (1,3)) & = \delta(q_2, (2,3)) = q_3, \\
			\delta(q_i, (j,k)) & = \emptyset \quad \text{for } k \leq i
		\end{align*}
		So the group $S_3$ can be presented in this way: $S_3=\{1,a,b,c,ab,ac\}$ where $1$ is the identity permutation and $a$, $b$, and $c$ define transpositions $(1,2)$, $(1,3)$, and $(2,3)$ respectively.
		
		Now we prove the first and second parts of the theorem by induction. For $n=2$, the minimal DFA is shown in figure \ref{fig:S2}:
		\begin{figure}[h]
			\centering
			\includegraphics[scale=0.23]{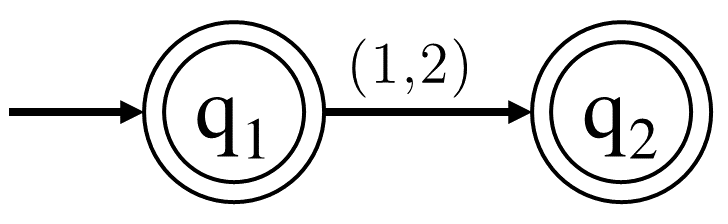}
			\caption{DFA corresponding to the group $S_2$ constructed based on theorem \ref{minimal_DFA}}
			\label{fig:S2}
		\end{figure}
		So the accepted words are $L_2=\left\{\epsilon,(1,2)\right\}$ which correspond to group $S_2$. Moreover, there are no two different words defining the same permutation.
		
		Now assume the proposition is true for $n=n_0$, i.e., the language accepted by the $M_{n_0}$ (which we call $L_{n_0}$) corresponds to the symmetric group $S_{n_0}$. In addition, there are no two different words defining the same permutation. Now we add a new node $q_{n_0+1}$ and connect every previous node to it through edges $(1,n_0+1),(2,n_0+1),\ldots,(n_0,n_0+1)$. For convenience, we consider its equivalent NFA (figure \ref{fig:NFA}).
		
		\begin{figure}[h]
			\centering
			\includegraphics[scale=0.3]{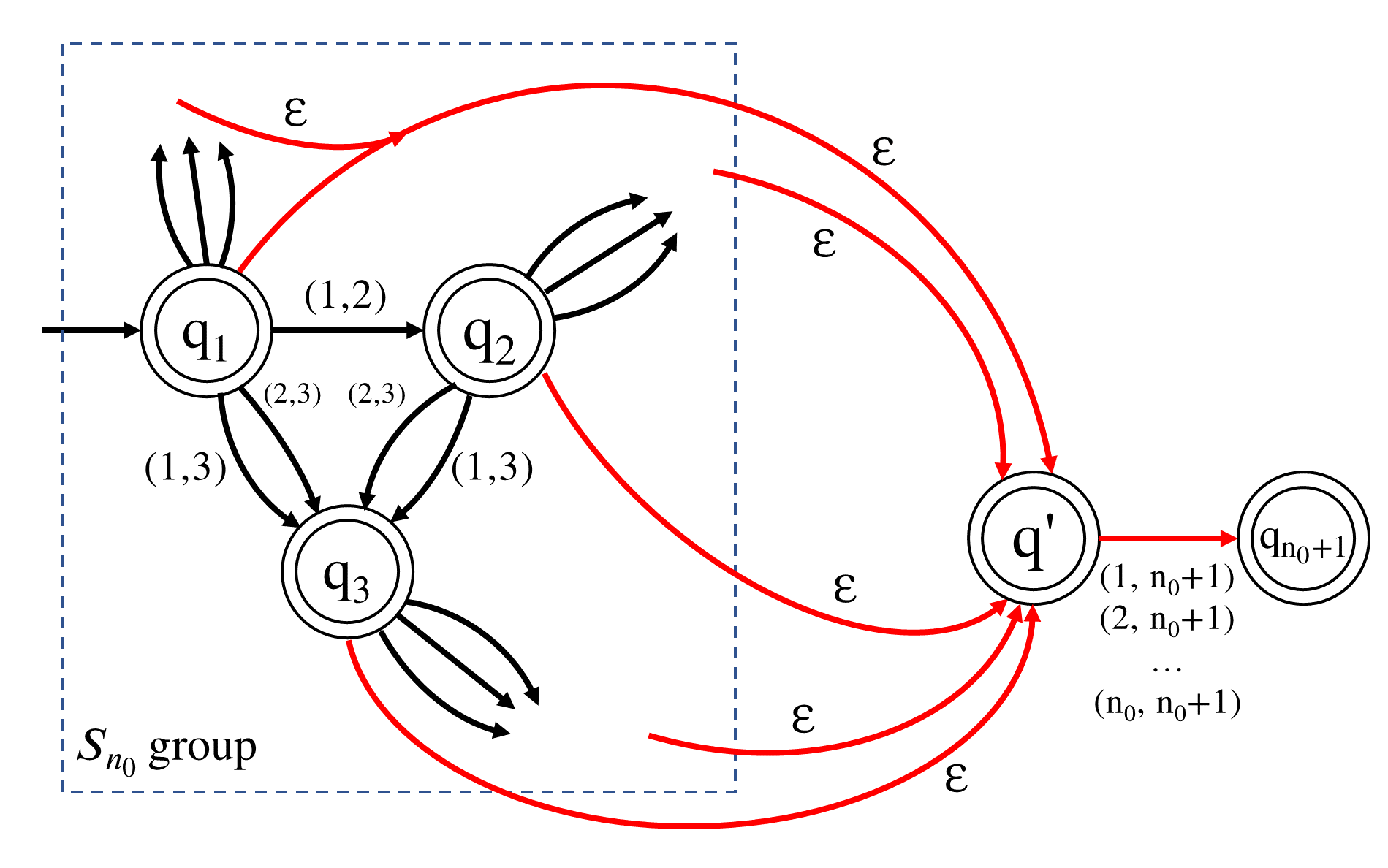}
			\caption{NFA equivalent to the group $S_{n_0+1}$ based on the construction explained in theorem \ref{minimal_DFA}}
			\label{fig:NFA}
		\end{figure}

		What we do is equivalent to connecting all previous states to a new state $q'$ by $\epsilon$-transitions and connecting $q'$ to $q_{n_0+1}$ through edges $(1,n_0+1),(2,n_0+1),\ldots, (n_0,n_0+1)$. The words accepted at the state $q'$ are the words accepted at states $q_1,q_2,\ldots,q_{n_0+1}$ which are equal to $S_{n_0}$ according to the induction hypothesis. Consider the language accepted by the whole NFA, which we call $L_{n_0+1}$. Our first goal is to show that $L_{n_0+1}=S_{n_0+1}$. First, note that
		\[S_{n_0+1}\supseteq L_{n_0+1}=S_{n_0}\cup S_{n_0}(1,n_0+1)\cup S_{n_0}(2,n_0+1)\ldots \cup S_{n_0}(n_0,n_0+1)\]
		where $Ab=\left\{ab|a\in A\right\}$ for a set $A$ and an element $b$ in group $S_{n_0+1}$.
		
		Furthermore, these sets are separate. Because
		\begin{itemize}
			\item Suppose there is a permutation $\pi \in S_{n_0}(i,n_0+1) \cap S_{n_0}(j,n_0+1)$. Then there exist permutations $\sigma_1, \sigma_2 \in S_{n_0}$ such that $\pi = \sigma_1(i,n_0+1) = \sigma_2(j,n_0+1)$. So $(j,n_0+1)(i,n_0+1) = \sigma_2^{-1} \sigma_1 \in S_{n_0}$, which is a contradiction.\footnote{Since $(j,n_0+1)(i,n_0+1)=(n_0+1,i,j)\notin S_{n_0}$}
			\item Now suppose there is a permutation $\pi \in S_{n_0} \cap S_{n_0}(i,n_0+1)$. Then there exist permutations $\sigma_1, \sigma_2 \in S_{n_0}$ such that $\pi = \sigma_1 = \sigma_2(i,n_0+1)$. So $(i,n_0+1) = \sigma_2^{-1} \sigma_1 \in S_{n_0}$ which is a contradiction.\footnote{Here we have used the properties of a group, including closure with respect to the group operation and invertibility of the elements.}
		\end{itemize}
		In addition, the cardinality of each set is $n!$ since the function $f:S_{n_0} \rightarrow S_{n_0}(i,n_0+1)$ such that $f(\pi)=\pi(i,n_0+1)$ is a bijection.
		
		As a result, the sets $S_{n_0}$, $S_{n_0}(1,n_0+1)$, $\ldots$, $S_{n_0}(n_0,n_0+1)$ partition the set $L_{n_0+1}$ as well as having the same cardinality. So
		\[|L_{n_0+1}| = |S_{n_0}| + |S_{n_0}(1,n_0+1)| + |S_{n_0}(2,n_0+1)| + \cdots + |S_{n_0}(n_0,n_0+1)| = (n_0+1)|S_{n_0}| = (n_0+1) \times n_0! = (n_0+1)!\]
		
		Now notice that based on the induction hypothesis, the words belonging to $L_{n_0}$ define distinct elements in group $S_{n_0}$. Hence for each $i$ such that $1 \leq i \leq n_0$ the words belonging to each $S_{n_0}(i,n_0+1)$ are distinct; because assuming $\pi_1(i,n_0+1) = \pi_2(i,n_0+1)$ for two permutations $\pi_1, \pi_2 \in S_{n_0}$ leads to $\pi_1 = \pi_2$. Using this result and the fact that the sets $S_{n_0}$, $S_{n_0}(1,n_0+1)$, $\ldots$, $S_{n_0}(n_0,n_0+1)$ partition the set $L_{n_0+1}$, we conclude that there are no repeating permutations in $L_{n_0+1}$.
		
		Since $L_{n_0+1} \subseteq S_{n_0+1}$ and they have the same finite cardinality, and there are no repeating permutations in $L_{n_0+1}$, we conclude that $L_{n_0+1} = S_{n_0+1}$.
		
		Now we prove that the DFA defined in the theorem is a minimal one. First, note that every state in the DFA $M_n$ is reachable; since for every $i\in\left\{2,\ldots,n\right\}$ the word $(1,i)$ puts the DFA in the state $q_i$. Furthermore, the initial state is reachable obviously.
		
		Now we prove that every two different states in the DFA are distinguishable, except the dead state, which we neglected in \cref{fig:NFA,fig:S2,fig:S3}. Therefore, we can partition the state set into final and nonfinal states to get the equivalence classes $\left\{q_1,q_2,\ldots,q_n\right\}$ and $\left\{q_{n+1}\right\}$.
		
		Now we split the first equivalency class, step by step. The state $q_1$ is distinguishable from other states, since $\delta(q_1,(1,2))=q_2$ which is final, but for every $i\in\left\{2,\ldots,n\right\}$, $\delta(q_i,(1,2))=q_{n+1}$ which is nonfinal. Likewise, $q_2$ is distinguishable from other states since $\delta(q_2,(1,3))=q_3$ which is final, but for every $i\in\left\{3,\ldots,n\right\}$, $\delta(q_i,(1,3))=q_{n+1}$ which is nonfinal. Moreover, we already proved that $q_1$ and $q_2$ are distinguishable. We can repeat this process for every state $q_k~(k<n)$. suppose we have proved that $q_1,q_2,\ldots,q_k$ are distinguishable. Also $q_k$ is distinguishable from next states; since $\delta(q_k,(1,k+1))=q_{k+1}$ which is final, but for every $i\in\left\{k+1,\ldots,n\right\}, \delta(q_i,(1,k+1))=q_{n+1}$ which is nonfinal. The last step is to prove that $q_n$ is distinguishable from others. However, this step has been proved through the previous steps.
	\end{proof}

	\begin{theorem} \label{remarks}
		Let $a_{1,1}, \ldots, a_{n-1,n}$ denote all the transpositions in which for all $i,j$ such that $1 \leq i < j \leq n$, $a_{i,j} \mapsto (i,j)$ is the mapping. We define an order relation $<_s$ among these symbols as follows:
		\[a_{h,i} <_s a_{j,k} \iff i <_s k \lor (i = k \land h <_s j)\]
		Let $<_w$ be the order relation among words induced by $<_s$.\footnote{That is $<_w$ extends $<_s$.} Suppose $L_n$ is the language accepted by $M_n$, the DFA defined in theorem \ref{minimal_DFA}. Then for each word $a \in L_n$, and another word $b$ such that $a \sim b$, we have $a <_w b$. In other words, the DFA $M_n$ defined in theorem \ref{minimal_DFA} accepts the canonical forms of group $S_n$ under mapping $a_{i,j} \mapsto (i,j)$ and relation $<_w$ ($\sim$ is the equivalence relation defined in \eqref{eq1}).
	\end{theorem}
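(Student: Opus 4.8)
The plan is to recast the statement as: every $w\in L_n$ is the $<_w$-least word in its class $\{w\}$. By Theorem~\ref{minimal_DFA} we know $|L_n|=n!=|S_n|$ and that distinct words in $L_n$ define distinct permutations, so if $\phi(\pi)$ denotes the $<_w$-minimal word defining $\pi$ (which exists because $<_w$ well-orders words, first by length then lexicographically by $<_s$), then $\phi$ is injective and it suffices to show $\phi(\pi)\in L_n$ for every $\pi$: this forces $\phi(S_n)=L_n$ by cardinality, hence every $w\in L_n$ equals $\phi(\pi)$ for the permutation $\pi$ it defines, which is exactly the claim. One can also assume at the outset that $b$ is inverse-free: replacing each symbol $a_{i,j}^{-1}$ by $a_{i,j}$ keeps the defined permutation fixed (transpositions are involutions) and, since $a_{i,j}<_s a_{i,j}^{-1}$, can only make the word $<_w$-smaller, so the minimum is attained by an inverse-free word and it is enough to compare $w$ with inverse-free $b\sim w$.

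Next I would make $\phi(\pi)$ explicit. Because $<_w$ sorts by length first, $\phi(\pi)$ has length equal to the minimal number of transpositions needed to write $\pi$, namely $n-c(\pi)$ where $c(\pi)$ is the number of cycles of $\pi$ (counting fixed points), and among such minimal factorizations $\phi(\pi)$ is the lexicographically least one. Peeling off the first letter yields the recursion $\phi(\pi)=t\cdot\phi(t\pi)$, where $t$ is the $<_s$-least transposition with $c(t\pi)=c(\pi)+1$, equivalently the $<_s$-least transposition whose two points lie in a common cycle of $\pi$ (so that right-multiplication splits that cycle). Iterating, $\phi(\pi)=t_1t_2\cdots t_m$ with $t_k$ the $<_s$-least cycle-splitting transposition of the residual permutation $\pi^{(k)}:=t_{k-1}\cdots t_1\pi$, and $\pi^{(m+1)}=\mathrm{id}$.

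The crux is then to show that the larger indices $b_1,b_2,\dots,b_m$ of $t_1,\dots,t_m$ satisfy $b_1<b_2<\cdots<b_m$, because that is precisely the condition under which $M_n$ accepts $t_1\cdots t_m$ (the automaton then traverses $q_1,q_{b_1},q_{b_2},\dots$ and never reaches the trap state). For the monotonicity: since $<_s$ minimizes the larger index first, $b_k$ is the smallest integer that is \emph{not} the minimum of its own cycle in $\pi^{(k)}$; hence every integer below $b_k$ is a cycle-minimum, so the cycle of $b_k$ contains exactly one element smaller than $b_k$ — its minimum — which is the forced partner $a_k$, and all remaining elements of that cycle exceed $b_k$. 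Right-multiplying by $(a_k,b_k)$ splits this cycle into a part still having $a_k$ as its minimum and a part consisting of $b_k$ together with elements all larger than $b_k$; no other cycle is touched. Thus in $\pi^{(k+1)}$ the element $b_k$ has become a cycle-minimum while every smaller element is still one, forcing $b_{k+1}>b_k$. This gives $\phi(\pi)\in L_n$ and finishes the proof.

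The step I expect to cause the most friction is the bookkeeping of the splitting operation: verifying carefully that $(a_k,b_k)\pi^{(k)}$ produces exactly two cycles with the asserted minima and leaves all other cycles intact, and, just before that, justifying the peeling recursion for the lexicographic minimum (i.e.\ that every minimal-length factorization of $\pi$ decomposes as a valid first transposition followed by a minimal-length factorization of the residual). Everything else is routine. An alternative is to run the whole argument as an induction on $n$ built on the coset decomposition $L_n=L_{n-1}\sqcup\bigsqcup_{i=1}^{n-1}L_{n-1}a_{i,n}$ from the proof of Theorem~\ref{minimal_DFA}, but that forces an extra case analysis (words that do versus do not involve the point $n$), which the direct greedy argument avoids.
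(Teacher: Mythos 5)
Your proposal is correct, but it reaches the key intermediate claim --- that the $<_w$-minimum of every equivalence class lies in $L_n$ --- by a genuinely different route from the paper. The paper argues locally: it lists rewriting rules showing that any adjacent pair of transposition symbols violating the DFA's structure (two symbols with the same larger index, or an inversion in the larger indices) can be replaced by a $<_w$-smaller equivalent pair, so no word outside $L_n$ can be the least element of its class; it then closes with the same counting step you use ($|L_n|=n!$ and distinctness of the defined permutations, from Theorem~\ref{minimal_DFA}). You instead construct the minimum globally, as a greedy cycle-splitting factorization, and verify directly that its larger indices increase, so the word is accepted. Each approach buys something: yours yields an explicit algorithm for computing the canonical form of a given permutation and, unlike the paper, explicitly disposes of words containing inverse symbols $a_{i,j}^{-1}$ (which do belong to the equivalence classes in the combinatorial-group-theoretic setup and which the paper's remarks silently ignore); the paper's rewriting argument is more self-contained, since your peeling recursion leans on the classical identity $\ell(\pi)=n-c(\pi)$ for the minimal factorization length, a fact the paper only records afterwards as Corollary~\ref{minimum_length}. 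The two friction points you flag --- the validity of the peeling recursion and the bookkeeping of the split cycle's minima --- both check out: a minimal factorization must begin with a cycle-splitting transposition because the residual must have minimal length one less, and the split of the cycle containing $a_k$ and $b_k$ leaves $a_k$ as the minimum of one part and $b_k$ as the minimum of the other since every other element of that cycle exceeds $b_k$.
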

	\begin{proof}
		In order to find the least element in each equivalency class, pay attention to the following remarks:
		\begin{enumerate}[a)]
			\item \label{remark_a} if a word $W$ has an equivalent word $V$ such that $V<W$, there will be no canonical form containing $W$ as a substring. Since for each two words $A$ and $B$, $V<W$ results in $AVB<AWB$.
			\item \label{remark_b}	Let $(h,i)$ and $(j,k)$ be two transpositions where $j<k$ and $h<i<k$. Then three cases may occur. \newline
			
			$\begin{cases} 
				\text{If } j=h,~~~~~~\text{then } (j,k)(h,i)=(h,k)(h,i)=(h,i,k) \sim (i,k,h)=(h,i)(i,k) \\
				\text{If } j=i,~~~~~~~\text{then } (j,k)(h,i)=(i,k)(h,i)=(i,k)(i,h)=(i,h,k) \sim (h,k,i)=(h,i)(h,k) \\
				\text{otherwise, } h, i, j, \text{and } k \text{ will be distinct.  so } (j,k)(h,i) \sim (h,i)(j,k)
			\end{cases}$
			\newline
			
			Hence, according to remark \ref{remark_a}, in each case $(j,k)(h,i)$ cannot be contained in a canonical form.
			\item \label{remark_c} Let $(i,k)$ and $(j,k)$ be two transpositions where $i<j<k$. Since $(i,k)(j,k)=(k,i)(k,j)=(k,j,i)\sim(i,k,j)=(i,j)(i,k)$ there will be no canonical forms containing $(i,k)(j,k)$ where $k>i,j$ according to remark \ref{remark_a}.
			\item \label{remark_d} Based on remarks \ref{remark_b} and \ref{remark_c}, we conclude that if $j<k$ and $h<i$, then $(j,k)(h,i)$ can be contained in a canonical form only if $k<i$.
			\item \label{remark_e} Let $(i_1,j_1)(i_2,j_2)\ldots(i_t,j_t)$ be a canonical form in which for each $k$, $i_k<j_k$. Then based on remark \ref{remark_d}, we have $j_1<j_2<...<j_t$. Note that the words having this form are exactly what the DFA accepts. Now we prove that all words having this form are canonical forms. For this purpose, we can arrange the transpositions as follows:
			\[
			\begin{aligned}
				(1,2) \\
				(1,3) & \quad (2,3) \\
				(1,4) & \quad (2,4) \quad (3,4) \\
				\ldots \\
				(1,n) & \quad (2,n) \quad \ldots \quad (n-1,n)
			\end{aligned}
			\]
		\end{enumerate}
	The words accepted by the DFA are constructed by selecting transpositions $a_{i_1,j_1},a_{i_2,j_2},\ldots,a_{i_t,j_t}$, such that $j_1<j_2<\ldots<j_t$. Of course, you can select no transpositions from some rows. Even you can select no transpositions at all, which results in the identity permutation.
	
	Now note that each sequence of transpositions out of the words accepted by the DFA has one of the following properties:
	\begin{enumerate}
		\item Including two transpositions from one row.
		\item Including transpositions $a_{i_1,j_1},a_{i_2,j_2}$ such that $j_1<j_2$ and $a_{j_2}$ comes before $a_{j_1}$ in the sequence.
	\end{enumerate}
	So they cannot be canonical forms according to remarks \ref{remark_b} and \ref{remark_a}, respectively. Given that all permutations are presented once in $L_n$, the words accepted by the DFA are the canonical forms.
	\end{proof}

	\begin{corollary} \label{minimum_length}
		Each word belonging to $L_n$ has the minimum length in its equivalency class.
	\end{corollary}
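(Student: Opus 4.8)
The plan is to read off this corollary directly from Theorem \ref{remarks} together with the definition of the word order $<_w$, with essentially no extra work. The one structural fact I would make explicit first is that $<_w$ is \emph{length-first}: by the construction of an order on words recalled earlier from \cite{Magnus2004-in}, whenever $L(W_1) < L(W_2)$ we have $W_1 <_w W_2$, and two words are compared lexicographically (via $<_s$) only when they have equal length. Thus $<_w$ is a strict total order on words that refines comparison by length.

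Given this, the argument is short. Fix a word $a \in L_n$ and let $b$ be an arbitrary word with $b \sim a$; the goal is $L(a) \le L(b)$. If $b = a$ this is trivial, so assume $b \neq a$. By Theorem \ref{remarks}, $a$ is the least element of its $\sim$-class, so $a <_w b$. Suppose, toward a contradiction, that $L(b) < L(a)$. Then the length-first property forces $b <_w a$, which together with $a <_w b$ contradicts the fact that $<_w$ is a strict (hence irreflexive, antisymmetric) order. Hence $L(b) \ge L(a)$. Since $b$ ranged over all words equivalent to $a$, the word $a$ has minimum length in $\{a\}$, which is exactly the assertion.

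I do not expect a genuine obstacle here: the substantive content has already been carried out in Theorem \ref{remarks}. The only point needing a moment of care is to invoke the correct feature of $<_w$, namely that it sorts primarily by length before breaking ties with $<_s$, so that $a <_w b$ really does preclude $L(b) < L(a)$. (A purely combinatorial proof is also available, using that an accepted word selects at most one transposition $(x,j)$ for each index $j$ and that any shortening would have to merge two such selections; but the order-theoretic route above is cleaner and reuses precisely what Theorem \ref{remarks} establishes.)
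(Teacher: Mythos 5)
Your proposal is correct and follows exactly the route the paper intends: the corollary is an immediate consequence of Theorem \ref{remarks} once one notes that the word order $<_w$ induced from \cite{Magnus2004-in} compares by length first, so the $<_w$-least element of an equivalence class necessarily has minimal length. The paper leaves this step implicit; your write-up simply makes it explicit.
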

	\begin{theorem} \label{expected}
		The expected minimum number of transpositions in the decomposition of a permutation $\sigma\in S_n$ is $n-H_n$ where $H_n=\sum_{i=1}^{n}\frac{1}{i}$ is the $n'$th harmonic number \cite{Fialkow1992-sf}.
	\end{theorem}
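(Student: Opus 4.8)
The plan is to combine the structural facts already established with a short counting argument, staying inside the automata framework of the paper. By Corollary~\ref{minimum_length} every word of $L_n$ has minimum length in its equivalence class, and by Theorem~\ref{minimal_DFA} the language $L_n$ contains exactly one representative of each element of $S_n$. Hence the minimum number of transpositions in a decomposition of $\sigma$ equals the length of the unique word of $L_n$ representing $\sigma$, and the quantity to be computed is just the average length of the $n!$ words accepted by $M_n$ (the uniform distribution on words matching the uniform distribution on permutations, again by the bijection in Theorem~\ref{minimal_DFA}).

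Next I would describe the shape of an accepting run. Since $\delta(q_i,(j,k))$ leaves the state set $F$ only toward $q_{\max\{j,k\}}$ when $\max\{j,k\}>i$, every non-trap run visits a strictly increasing sequence of states $q_1=q_{\ell_0},q_{\ell_1},\dots,q_{\ell_t}$ with $1<\ell_1<\dots<\ell_t\le n$, entering each non-initial state exactly once and consuming exactly one symbol per step; thus the length of the word equals the number of states among $q_2,\dots,q_n$ that the run visits. Moreover Theorem~\ref{minimal_DFA} gives exactly $k-1$ transitions into $q_k$ from any lower-indexed state, so a run is faithfully encoded by the pair (set $\{q_{\ell_1},\dots,q_{\ell_t}\}$ of non-initial visited states, choice of one of the $\ell_m-1$ transitions at step $m$), and distinct pairs give distinct words. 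In particular the number of accepted words whose run visits precisely the state set $\{q_{\ell_1},\dots,q_{\ell_t}\}$ is $\prod_{m=1}^{t}(\ell_m-1)$, a product that does not depend on which lower state precedes each $\ell_m$.

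With this in hand the computation is routine. Writing the length of a word as $\sum_{k=2}^{n}V_k$, where $V_k$ is the indicator that the run visits $q_k$, linearity of expectation reduces everything to $\Pr[V_k=1]$ for a uniformly random accepted word. Summing $\prod_m(\ell_m-1)$ over all subsets of $\{2,\dots,n\}$ gives $\prod_{k=2}^{n}\bigl(1+(k-1)\bigr)=n!$, re-confirming $|L_n|=n!$; summing only over subsets that contain a fixed $k$ factors out $(k-1)$ and leaves $n!/k$, so $\Pr[V_k=1]=\frac{k-1}{k}$. Therefore the expected length is $\sum_{k=2}^{n}\frac{k-1}{k}=(n-1)-\sum_{k=2}^{n}\frac1k=n-H_n$, which is the claim.

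The only place that needs care is the encoding step: one must verify precisely that an accepting run is determined by, and determines, the pair (visited-state set, per-step transition choice), and that the per-step count $k-1$ is genuinely independent of the previously visited state, since that independence is exactly what licenses the subset-factorization $\sum_{S\subseteq\{2,\dots,n\}}\prod_{k\in S}(k-1)=\prod_{k=2}^{n}k$. All of this follows from the explicit form of $\delta$ in Theorem~\ref{minimal_DFA}, so the obstacle is bookkeeping rather than conceptual. (As a sanity check one can note $n=3$ gives $\tfrac12+\tfrac23=\tfrac76=3-H_3$. A purely classical alternative would use that the minimum transposition length of $\sigma$ is $n-c(\sigma)$, with $c(\sigma)$ the number of cycles, together with the deletion-of-$n$ recursion $\mathbb{E}[c(\sigma)]=\mathbb{E}[c(\sigma')]+\tfrac1n$ giving $\mathbb{E}[c(\sigma)]=H_n$; the automaton-based proof is preferred here because it uses only results proved in this paper.)
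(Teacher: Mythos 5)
Your argument is correct, but it takes a genuinely different route from the paper: the paper does not prove this statement at all, importing it as a known result with a citation to Fialkow and Salas \cite{Fialkow1992-sf}, whose classical argument rests on the identity that the minimum transposition length of $\sigma$ is $n-c(\sigma)$ (with $c(\sigma)$ the number of cycles) together with $\mathbb{E}[c(\sigma)]=H_n$ --- precisely the alternative you sketch in your closing parenthesis. What you supply instead is a self-contained derivation inside the paper's own framework: the minimum decomposition length of $\sigma$ is identified with the length of its unique word in $L_n$ via Theorem~\ref{minimal_DFA} and Corollary~\ref{minimum_length}, a non-trap run is shown to visit a strictly increasing sequence of states with exactly $k-1$ transitions into $q_k$ regardless of the source state, and the subset factorization $\sum_{S\subseteq\{2,\dots,n\}}\prod_{k\in S}(k-1)=\prod_{k=2}^{n}k=n!$ then yields $\Pr[V_k=1]=\frac{k-1}{k}$ and the expected length $\sum_{k=2}^{n}\bigl(1-\tfrac{1}{k}\bigr)=n-H_n$. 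The one point you flag --- that the per-step transition count $k-1$ is independent of the previously visited state --- is indeed immediate from the explicit $\delta$ of Theorem~\ref{minimal_DFA}, so the argument closes. Your version buys the paper a proof that uses only its own machinery and incidentally re-derives $|L_n|=n!$ (consistent with Theorem~\ref{number_of_paths}); the cited classical proof is shorter but external. Note only that your route computes the average word length of $L_n$ directly and deduces the theorem from it, which reverses the paper's order of deduction, where Corollary~\ref{avg_length} is obtained \emph{from} Theorem~\ref{expected} and Corollary~\ref{minimum_length}; if adopted, Corollary~\ref{avg_length} would become the primary statement and the theorem its consequence.
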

	\begin{corollary} \label{avg_length}
		Using corollary \ref{minimum_length} and theorem \ref{expected}, the average length of permutations constructed by the DPFA is $n-H_n$.
	\end{corollary}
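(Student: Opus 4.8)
The plan is to chain Corollary~\ref{minimum_length} with Theorem~\ref{expected}, the link between them being the fact that the DPFA produces the uniform distribution on $S_n$. First I would recall from the proof of Theorem~\ref{minimal_DFA} that $W\mapsto(\text{permutation defined by }W)$ restricts to a bijection from $L_n$ onto $S_n$; thus each $\sigma\in S_n$ has a unique accepted representative $W_\sigma\in L_n$ with $W_\sigma\sim\sigma$. By the construction of Step~3 (Theorem~\ref{prob_thm}), the transition probabilities are chosen precisely so that each accepted word, hence each permutation, is generated with probability $1/n!$.

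Next I would pin down the random variable being averaged. A run of the DPFA traverses a path of $M_n$ and emits one transposition per edge, so the word $W_\sigma$ it outputs has length $L(W_\sigma)$ equal to the number of transpositions (swaps) along that path. Consequently the average length of the permutations it constructs is
\[
\sum_{\sigma\in S_n}\frac{1}{n!}\,L(W_\sigma)=\frac{1}{n!}\sum_{\sigma\in S_n}L(W_\sigma).
\]

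Then I would apply Corollary~\ref{minimum_length}: since $W_\sigma$ has minimum length in its $\sim$-class, $L(W_\sigma)$ is exactly the minimum number of transpositions appearing in any decomposition of $\sigma$. Hence the displayed average is the mean, over a uniformly random $\sigma\in S_n$, of the minimum number of transpositions needed to write $\sigma$, which by Theorem~\ref{expected} equals $n-H_n$, as claimed.

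The argument involves no real computation; the only points needing care are verifying that the DPFA genuinely induces the uniform measure on $S_n$ (so that averaging over its runs is the same as averaging over $S_n$ — this relies on the bijection of Theorem~\ref{minimal_DFA} together with the probability assignment of Theorem~\ref{prob_thm}) and checking that the three notions of size involved — word length in Corollary~\ref{minimum_length}, the number of hardware swaps, and the minimum number of transpositions in Theorem~\ref{expected} — all coincide. I expect this last bookkeeping identification to be the main (and only mild) obstacle.
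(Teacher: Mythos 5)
Your argument is correct and is exactly the chain of reasoning the paper intends: the DPFA outputs each permutation's unique accepted representative with probability $1/n!$, Corollary~\ref{minimum_length} identifies the emitted word's length with the minimum number of transpositions in any decomposition of that permutation, and Theorem~\ref{expected} then gives the average $n-H_n$. The bookkeeping identifications you flag are the right ones to check, and they all go through (note in particular that transpositions are involutions, so allowing inverse symbols in words does not change the minimum length).
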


	In the next step, we must generate all the permutations with the same probability. To pursue this goal, we must assign a suitable probability to each transition to convert the DFA to a DPFA. The following theorem explains how to do this. Before going to the next theorem, note that if a language is finite, for every useful state $q_a$ in its DFA, if there is a transition from $q_a$ to $q_b$, there must not be any transitions from $q_b$ to $q_a$; otherwise it will lead to infinite number of words. Also note that this condition is weaker than being a directed acyclic graph (DAG).
	\begin{theorem} \label{prob_thm}
		Consider a DFA of a finite language $L$, starting from state $q_1$. Let $\pi_a$ be the number of paths starting from state $q_a$ (including paths of length zero) that end to a final state. Then, if we consider the following conditions for useful states, each word is generated with probability $\frac{1}{|L|}$:
		\begin{itemize}
			\item $I(q_1)=1$ and $I(q_a)=0$ for all $a\neq1$ (i.e. we always start from state $q_1$)
			\item $P(q_a,e,q_b)=\frac{\pi_b}{\pi_a}$ as the probability of transition from state $q_a$ to state $q_b$ through symbol $e$ (and $P(q_a,e,q_b)=0$ if $(q_a,e,q_b)\notin\delta$)
			\item $F(q_a)=\frac{\chi_F(q_a)}{\pi_a}$ as the probability of halting the generation process in state $q_a$ (where $\chi_F$ is an indicator function and returns $1$ if $q_a$ is a final state and $0$ if it is non-final)
		\end{itemize}
	\end{theorem}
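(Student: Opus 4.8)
The plan is to carry everything on the weights $\pi_a$ and to check two things in turn: that the assignment of $I,P,F$ in the statement is a legitimate DPFA, and that every accepted word then receives probability exactly $1/|L|$. Throughout I would assume without loss of generality that every state of the DFA is useful, since useless states may be deleted (as recalled earlier) and a reachable dead state would force $\pi_a=0$ and make the formulas meaningless. The first thing to record is that, because $L$ is finite, the reachability relation restricted to useful states has no directed cycle: a useful state on a cycle could be entered, looped through arbitrarily many times, and then completed to an accepting run, giving infinitely many words of $L$. Hence every path counted by $\pi_a$ is finite and there are finitely many of them, so each $\pi_a$ is a finite nonnegative integer; moreover $\pi_a\ge 1$ for every useful state, since a useful state lies on some accepting run and the tail of that run from $q_a$ onward is one of the paths counted by $\pi_a$.

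The engine of the argument is the recurrence
\[
\pi_a \;=\; \chi_F(q_a) \;+\; \sum_{(q_a,e,q_b)\in\delta}\pi_b ,
\]
obtained by classifying each path counted by $\pi_a$ by whether it has length zero (possible exactly when $q_a$ is final, contributing $\chi_F(q_a)$) or starts with a transition $(q_a,e,q_b)$ and then continues as a path from $q_b$ to a final state. Dividing this identity by $\pi_a>0$ yields $F(q_a)+\sum_{e,q_b}P(q_a,e,q_b)=1$, so the normalization conditions of a DPFA hold; and since each $\pi_b$ in the sum is one of its own nonnegative summands, $\pi_b\le\pi_a$, so all the assigned numbers lie in $[0,1]$. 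Together with $I(q_1)=1$ this shows the construction is a valid DPFA (indeed a deterministic one, since the underlying automaton is a DFA).

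For the probability computation, fix $w=e_1e_2\cdots e_k\in L$. By determinism there is a unique path $\theta=(s_0,e_1,s_1,\ldots,s_{k-1},e_k,s_k)$ reading $w$ from the start state, and it is accepting, so $s_0=q_1$ and $s_k\in F$ and $\Pr_{\mathcal A}(w)=\Pr_{\mathcal A}(\theta)$. Substituting the definitions and telescoping,
\[
\Pr_{\mathcal A}(\theta)=I(q_1)\cdot\prod_{j=1}^{k}\frac{\pi_{s_j}}{\pi_{s_{j-1}}}\cdot F(s_k)=1\cdot\frac{\pi_{s_k}}{\pi_{q_1}}\cdot\frac{1}{\pi_{s_k}}=\frac{1}{\pi_{q_1}},
\]
using $F(s_k)=\chi_F(s_k)/\pi_{s_k}=1/\pi_{s_k}$ because $s_k$ is final (the empty-word case $k=0$ is the same with an empty product). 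Finally, determinism gives a bijection between paths from $q_1$ ending in a final state and words of $L$: each such path reads a unique word, which is then accepted, and conversely each word of $L$ is read by exactly one such path. Hence $\pi_{q_1}=|L|$ and $\Pr_{\mathcal A}(w)=1/|L|$ for every $w\in L$.

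The only genuinely delicate point is the first paragraph: the reduction to a trimmed DFA and the acyclicity argument among useful states that guarantees $0<\pi_a<\infty$; once that is secured, the recurrence, the normalization check, and the telescoping are all routine. As a bonus the same reasoning shows no string outside $L$ is ever output, since halting is assigned positive probability only at final states, so the $\pi_a$-weights actually define a genuine probability distribution on $L$ rather than a sub-probability distribution.
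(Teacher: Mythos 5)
Your proposal is correct and follows essentially the same route as the paper: the recurrence $\pi_a=\chi_F(q_a)+\sum_{(q_a,e,q_b)\in\delta}\pi_b$ divided through by $\pi_a$ for the normalization check, the telescoping product along the unique accepting path, and $\pi_{q_1}=|L|$ via the path--word bijection. Your opening paragraph (trimming to useful states, acyclicity from finiteness of $L$, and $0<\pi_a<\infty$) is a slightly more careful treatment of well-definedness than the paper's remark preceding the theorem, but it is a refinement rather than a different argument.
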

	\begin{proof}
		First, we must show that the probabilities claimed in the theorem are well defined. It is obvious that all the defined probabilities are non-negative. Furthermore, $\sum_{q\in Q}^{}I(q)=1$. So we must check the second condition:
		\[\forall q_a\in Q,F(q_a)+\sum_{e\in\Sigma,q_b\in Q}^{}P(q_a,e,q_b)=1\]
		
		For convenience, we define the function $E:Q \times Q \rightarrow \mathbb{Z}^{\geq 0}$ with the function rule $E(q_a,q_b) =$ the number of edges connecting $q_a$ to $q_b$. Note that for every $b$ such that $b \neq a$, the number of paths starting with $q_a \rightarrow q_b$ is $\pi_b E(q_a,q_b)$. Therefore
		\[\pi_a = \chi_F (q_a) + \sum_{b \neq a} \pi_b E(q_a,q_b)\]
		Therefore
		\[\frac{\chi_F (q_a)}{\pi_a} + \sum_{b \neq a} \frac{\pi_b}{\pi_a} E(q_a,q_b) = 1\]
		In other words,
		\[F(q_a ) + \sum_{b \neq a} P(q_a,e,q_b )E(q_a,q_b ) = 1\]
		This can be written like this
		\[F(q_a ) + \sum_{i \text{ s.t. } b_i \neq a} \sum_{e \text{ s.t. } (q_a,e,q_{b_i}) \in \delta} P(q_a,e,q_{b_i}) = 1\]
		Note that for all $e$ such that $(q_a,e,q_b) \notin \delta$, $P(q_a,e,q_b) = 0$. Furthermore, since we have considered useful states, and based on the remark before the theorem, a useful state cannot have self-loop; otherwise it would create an infinite number of words. Therefore $P(q_a,e,q_a) = 0$
		Therefore
		\[F(q_a) + \sum_{b_i} \sum_{e \text{ s.t. } (q_a,e,q_{b_i}) \in \delta} P(q_a,e,q_{b_i}) = 1\]
		That is
		\[F(q_a) + \sum_{e \in \Sigma, q_b \in Q} P(q_a,e,q_b) = 1\]

		Now we are ready to prove that with these transition probabilities ($P(q_a,e,q_b)=\frac{\pi_b}{\pi_a}$), every word is generated with probability $\frac{1}{|L|}$.
		
		First, note that since every path corresponds to a specific word, the number of paths starting from node $q_1$ equals the number of language elements. Hence $\pi_1=|L|$. Now consider a specific word. In the DFA, it has such a form: \[q_1 \xrightarrow{e_1} s_2 \xrightarrow{e_2} s_3 \xrightarrow{e_3} \ldots \xrightarrow{e_{t-1}} s_t.\]
		So its production probability is $\frac{\pi_{s_2}}{\pi_1} \times \frac{\pi_{s_3}}{\pi_{s_2}} \times \ldots \times \frac{\pi_{s_t}}{\pi_{s_{t-1}}} \times\frac{1}{\pi_{s_t}} =\frac{1}{\pi_1} =\frac{1}{|L|}$.
	\end{proof}

	\begin{theorem} \label{number_of_paths}
		Consider the minimal DFA of group $S_n$. Let $a$ be an integer such that $1\leq a\leq n$ and $\pi_a$ be the number of paths starting from state $q_a$ (including paths of length zero). Then $\pi_a=\frac{n!}{a!}$.
	\end{theorem}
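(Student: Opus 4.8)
The plan is to translate the explicit description of $M_n$ from Theorem~\ref{minimal_DFA} into a recursion for $\pi_a$ and then solve that recursion by downward induction on $a$. First I would pin down the only local structure we actually use. By Theorem~\ref{minimal_DFA}, from a non-trap state $q_a$ (with $1\le a\le n$), a symbol $(j,k)$ with $k>a$ leads to $q_k$, while a symbol $(j,k)$ with $k\le a$ leads to the trap $q_{n+1}$; and for each $b>a$ there are exactly $b-1$ symbols with target $q_b$, namely the transpositions $(1,b),(2,b),\dots,(b-1,b)$. Since $q_1,\dots,q_n$ are all final and $q_{n+1}$ is the trap, a path ending at a final state never traverses an edge into $q_{n+1}$; hence $\pi_a$ counts precisely the directed paths (each parallel edge counted separately), including the length-zero path, that start at $q_a$ and stay inside $\{q_a,q_{a+1},\dots,q_n\}$. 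Because every edge strictly increases the state index, all such paths are finite, so $\pi_a$ is well defined.

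Next I would extract the recursion. Isolating the empty path and grouping the nonempty ones by their first edge $q_a\to q_b$ (of which there are $b-1$, after which $\pi_b$ continuations remain) gives
\[
\pi_a \;=\; 1 \;+\; \sum_{b=a+1}^{n} (b-1)\,\pi_b ,
\]
which is exactly the identity $\pi_a=\chi_F(q_a)+\sum_{b\neq a}E(q_a,q_b)\,\pi_b$ used in the proof of Theorem~\ref{prob_thm}, now specialized to $M_n$ with $E(q_a,q_b)=b-1$ for $b>a$ and $E(q_a,q_b)=0$ otherwise.

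Finally I would solve this by downward induction on $a$, from $a=n$ to $a=1$. The base case is immediate: the only path from $q_n$ is the empty one, so $\pi_n=1=\tfrac{n!}{n!}$. For the inductive step, assuming $\pi_b=\tfrac{n!}{b!}$ for every $b>a$,
\[
\pi_a \;=\; 1 + \sum_{b=a+1}^{n}(b-1)\frac{n!}{b!}
\;=\; 1 + n!\sum_{b=a+1}^{n}\!\left(\frac{1}{(b-1)!}-\frac{1}{b!}\right)
\;=\; 1 + n!\!\left(\frac{1}{a!}-\frac{1}{n!}\right)
\;=\; \frac{n!}{a!},
\]
using the telescoping identity $\tfrac{b-1}{b!}=\tfrac{1}{(b-1)!}-\tfrac{1}{b!}$. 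This closes the induction.

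I do not expect a real obstacle here; the only points requiring care are reading the multiplicity $b-1$ of parallel edges correctly off Theorem~\ref{minimal_DFA} and keeping the indices in the telescoping sum straight. As a sanity check, the formula yields $\pi_1=n!=|S_n|=|L_n|$, consistent with the observation (used in Theorem~\ref{prob_thm}) that the number of accepting paths from the start state equals $|L_n|$. One could also bypass the induction: expanding $\pi_a$ as a sum over increasing state-sequences gives $\pi_a=\sum_{S\subseteq\{a+1,\dots,n\}}\prod_{b\in S}(b-1)=\prod_{b=a+1}^{n}\bigl(1+(b-1)\bigr)=\prod_{b=a+1}^{n}b=\tfrac{n!}{a!}$, which I would include as a short alternative remark.
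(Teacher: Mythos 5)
Your proof is correct, but it follows a genuinely different route from the paper's. The paper runs the induction \emph{upward} from $a=1$: the base case $\pi_1=n!$ is imported from the earlier identification $|L_n|=|S_n|$, and the inductive step establishes the single ratio $\pi_{a_0}=(a_0+1)\,\pi_{a_0+1}$ by exhibiting an $(a_0+1)$-to-one correspondence between paths from $q_{a_0}$ and paths from $q_{a_0+1}$ (the $a_0$ parallel edges $q_{a_0}\to q_{a_0+1}$ prepended to a path, plus the observation that any path from $q_{a_0+1}$ is verbatim a path from $q_{a_0}$). You instead run the induction \emph{downward} from the purely local base case $\pi_n=1$, using the first-step recursion $\pi_a=1+\sum_{b=a+1}^{n}(b-1)\pi_b$ and a telescoping sum; this is exactly the identity $\pi_a=\chi_F(q_a)+\sum_{b\neq a}E(q_a,q_b)\pi_b$ already appearing in the proof of Theorem~\ref{prob_thm}, so your argument is self-contained and in fact \emph{re-derives} $\pi_1=n!$ rather than assuming it, which is a pleasant consistency check against Theorem~\ref{minimal_DFA}. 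The paper's correspondence argument buys a one-line ratio with no summation, at the cost of leaning on the external count $|L_n|=n!$; your version costs a telescoping computation but depends only on the transition structure, and your closing product identity $\pi_a=\prod_{b=a+1}^{n}\bigl(1+(b-1)\bigr)=n!/a!$ is a nice induction-free alternative worth keeping as a remark.
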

	\begin{proof}
		We prove the theorem by induction on the $a^{th}$ node.\footnote{In fact, the principle of induction is explained like this:
			
			Let $\alpha$ be an integer, and let $P(k)$ be a proposition about $k$ for each integer $k\geq\alpha$. Then if $P(\alpha)$ is true and $\forall k\geq\alpha: (P(k)\Longrightarrow P(k+1))$, we conclude that $P(k)$ is true for all integers $k\geq\alpha$.\newline
			However, here we have statement $P(k)$, which we want to prove for $k\leq n$. For this purpose, we can consider the statement $Q(k)=((k\leq n)\Longrightarrow P(k))$ and use the principle of induction for $Q$.} Obviously, the proposition is true for $a=1$; since the number of paths starting from state $q_1$ (which is the initial state) equals the total number of permutations which is $n!$.
		
		Now suppose that $\pi_{a_0} = \frac{n!}{a_0!}$ for $a_0 < n$. We want to show that $\pi_{a_0+1} = \frac{n!}{(a_0+1)!}$. Let $\mathscr{P}_i$'s be the sets of all paths starting from node $q_i$ $(1 \leq i \leq n)$. Consider the mapping $f: \mathscr{P}_{a_0} \rightarrow \mathscr{P}_{a_0+1}$. Consider an arbitrary path $e_1 e_2...e_t \in \mathscr{P}_{a_0+1}$ shown in figure \ref{fig:paths}. Since there are $a_0$ edges from node $q_{a_0}$ to node $q_{a_0+1}$, there are $a_0$ paths $\sigma e_1 e_2...e_t \in \mathscr{P}_{a_0}$ for different choices of $\sigma$. Moreover, $e_1 e_2...e_t \in \mathscr{P}_{a_0}$ because if $\delta(q_{a_0+1},e_1) = q_{k_1}$ then $\delta(q_{a_0},e_1) = q_{k_1}$.\footnote{Suppose $e_1=(i,j)$. Then $\delta(q_{a_0+1},e_1)=q_{k_1}\Longrightarrow j>a_0+1>a_0\Longrightarrow \delta(q_{a_0},e_1)=q_{k_1}$}
		
		As a result, for each arbitrary path starting from node $q_{a_0+1}$, there are exactly $(a_0+1)$ corresponding paths starting from node $q_{a_0}$. That is the function $f$ is a $(a_0+1)$-to-one correspondence. Therefore $\pi_{a_0} = (a_0+1) \pi_{a_0+1}$. Using the induction hypothesis, we obtain $\pi_{a_0+1} = \frac{\pi_{a_0}}{a_0+1} = \frac{\frac{n!}{a_0!}}{a_0+1} = \frac{n!}{(a_0+1)!}$.
	\end{proof}
	\begin{figure}[h]
		\centering
		\includegraphics[scale=0.3]{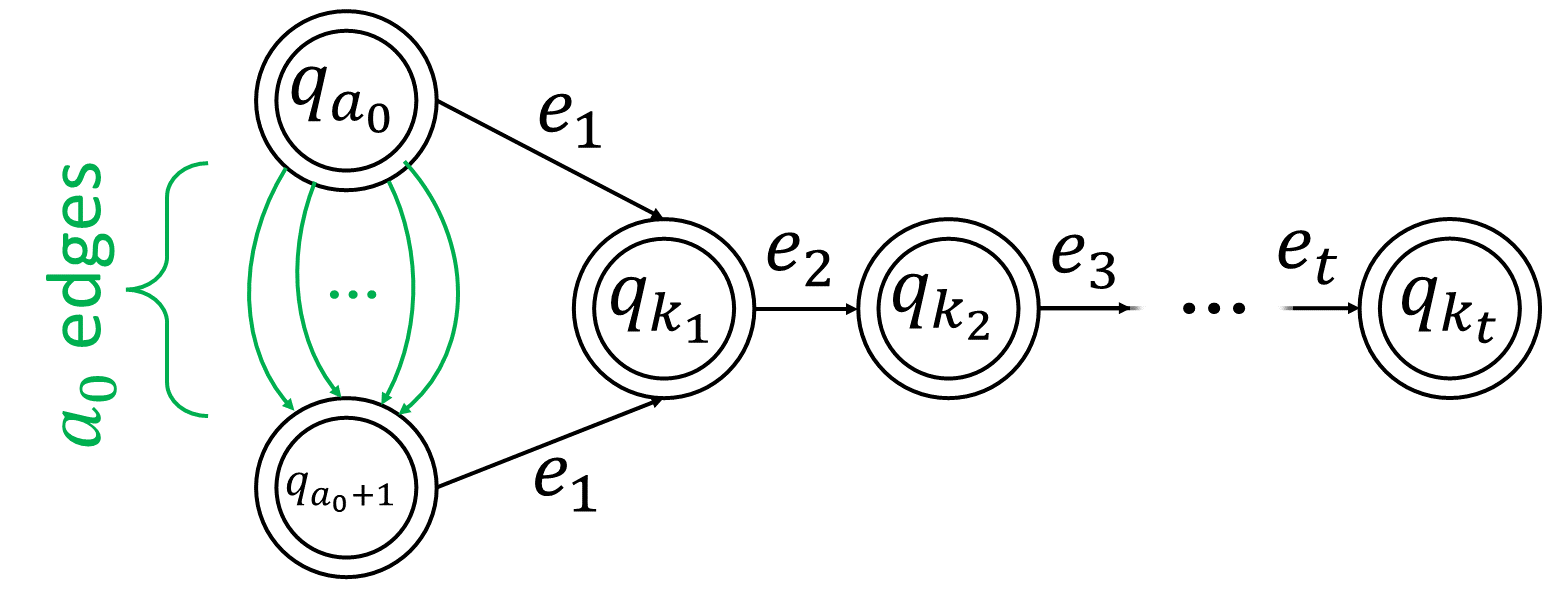}
		\caption{This picture aims to say for each arbitrary path starting from node $q_{a_0+1}$, there are exactly $a_0+1$ corresponding paths starting from node $q_{a_0}$}
		\label{fig:paths}
	\end{figure}

	\begin{corollary} \label{final_probs}
		from theorems \ref{prob_thm} and \ref{number_of_paths}, we conclude that if we consider $P(q_a,e,q_b)=\frac{\frac{n!}{b!}}{\frac{n!}{a!}}=\frac{a!}{b!}$ as the probability of transition from state $q_a$ to state $q_b$ through edge $e$, and F$(q_a)=\frac{1}{\frac{n!}{a!}}=\frac{a!}{n!}$ as the probability of halting the shuffling procedure in state $q_a$, every permutation is generated with probability $\frac{1}{n!}$.
	\end{corollary}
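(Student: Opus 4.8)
The plan is to obtain the corollary by a direct substitution: feed the path count $\pi_a = n!/a!$ supplied by Theorem \ref{number_of_paths} into the probability assignment of Theorem \ref{prob_thm}. Before doing so, I would first check that Theorem \ref{prob_thm} genuinely applies to the minimal DFA $M_n$ of Theorem \ref{minimal_DFA}. By Theorem \ref{minimal_DFA} the language $L_n$ accepted by $M_n$ equals $S_n$, hence is finite with $|L_n| = n!$; the unique initial state is $q_1$; and the transition rule $\delta(q_i,(j,k)) = q_{\max\{j,k\}}$, defined only when $k > i$, shows that every edge between non-trap states strictly increases the state index. Consequently no useful state has a self-loop and there is no edge returning from $q_b$ to $q_a$ once one has moved from $q_a$ to $q_b$, so the hypothesis recorded in the remark preceding Theorem \ref{prob_thm} holds. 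I would also note that, because every state $q_1,\dots,q_n$ is final while the trap state $q_{n+1}$ is useless and neglected, in $M_n$ every path terminates at a final state; hence the quantity $\pi_a$ of Theorem \ref{number_of_paths} coincides with the quantity $\pi_a$ of Theorem \ref{prob_thm}, and in particular $\pi_a \ge 1 > 0$, so all the ratios below are well defined.

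With the hypotheses in place the computation is immediate. Substituting $\pi_a = n!/a!$ into $P(q_a,e,q_b) = \pi_b/\pi_a$ yields $P(q_a,e,q_b) = (n!/b!)/(n!/a!) = a!/b!$, and substituting into $F(q_a) = \chi_F(q_a)/\pi_a$ gives $F(q_a) = 1\cdot(a!/n!) = a!/n!$, since $\chi_F(q_a) = 1$ for every relevant state. Theorem \ref{prob_thm} then asserts that each word of $L_n$ is generated with probability $1/|L_n| = 1/n!$.

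The last step I would carry out is to pass from words to permutations: by the second part of Theorem \ref{minimal_DFA} no two distinct words of $L_n$ define the same permutation, so the map from accepted words to the permutations they define is a bijection onto $S_n$; therefore generating words uniformly at random is the same as generating permutations uniformly at random, and every $\sigma \in S_n$ is produced with probability $1/n!$, as claimed. I do not expect a real obstacle here: the only points requiring care are the verification of the structural hypotheses of Theorem \ref{prob_thm} (absence of self-loops, positivity of the $\pi_a$, and the agreement of the two meanings of $\pi_a$) together with the appeal to injectivity of the word-to-permutation map; the arithmetic itself is trivial.
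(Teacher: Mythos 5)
Your proposal is correct and follows exactly the route the paper intends: the corollary is obtained by substituting $\pi_a = n!/a!$ from Theorem \ref{number_of_paths} into the probability assignment of Theorem \ref{prob_thm}, and the paper itself states it as an immediate consequence without further argument. Your additional checks --- that the two definitions of $\pi_a$ agree because every non-trap state is final, that the index-increasing transition rule rules out self-loops and return edges, and that the word-to-permutation map is injective by Theorem \ref{minimal_DFA} --- are all valid and merely make explicit the hypotheses the paper takes for granted.
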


	\section{Hardware Design} \label{Hardware_Design}
	According to corollary \ref{final_probs}, we calculated the transition and termination probabilities. Now we are ready to design a hardware device that simulates the states and moves between them with corresponding transition probabilities or sends a terminate signal with corresponding final probabilities in order to make us understand that the permutation is ready.

	For instance, consider the DPFA corresponding to group $S_4$ and its transition table. It is shown in figure \ref{fig:array}. The table is filled in based on corollary \ref{final_probs}. In each node, $q(p)$ means that the process halts in $q$ with probability $p$. Furthermore, on each edge, the label $o(p)$ means that the DPFA will create the output $o$ with probability $p$ \cite{Vidal2005-xv}.
	
	\begin{figure}[h]
		\centering
		\includegraphics[scale=0.23]{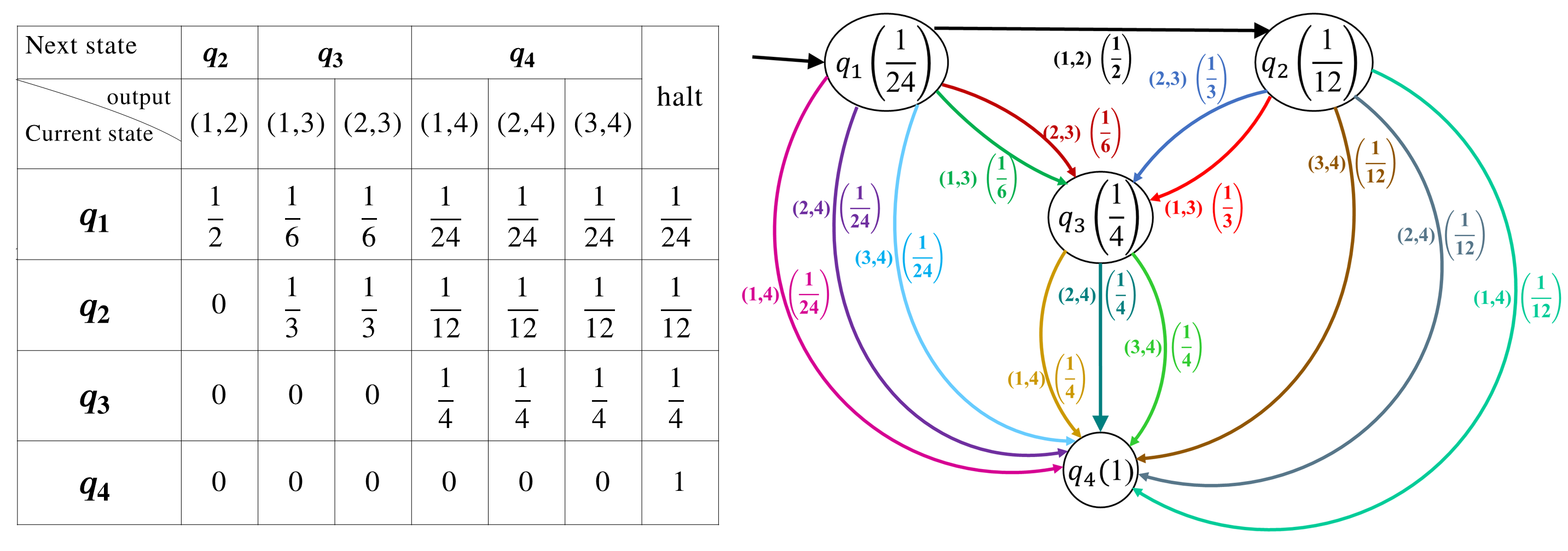}
		\caption{This picture illustrates the DPFA corresponding to group $S_4$, including both tabular and graphical representations.}
		\label{fig:array}
	\end{figure}
	
	Now we map the transition table of the DPFA to a ROM. Each address $a$ corresponds to state $q_a$. And the columns correspond to the transitions with the order explained in theorem \ref{expected}. We have the probabilities of each transition, so we use the idea of roulette wheel selection \cite{LIPOWSKI20122193}. Therefore, we can place the cumulative distribution function (CDF) of transitions at each row. However, instead, we multiply all the values by $n!$ in order to avoid struggling with floating-point numbers. Figure \ref{fig:first_hardware} illustrates the mapping of probabilities to the hardware for symmetric group $S_4$. Here we neglected the last column, which would contain $n!$ in each row.
	
	\begin{figure}[h]
		\centering
		\includegraphics[scale=0.25]{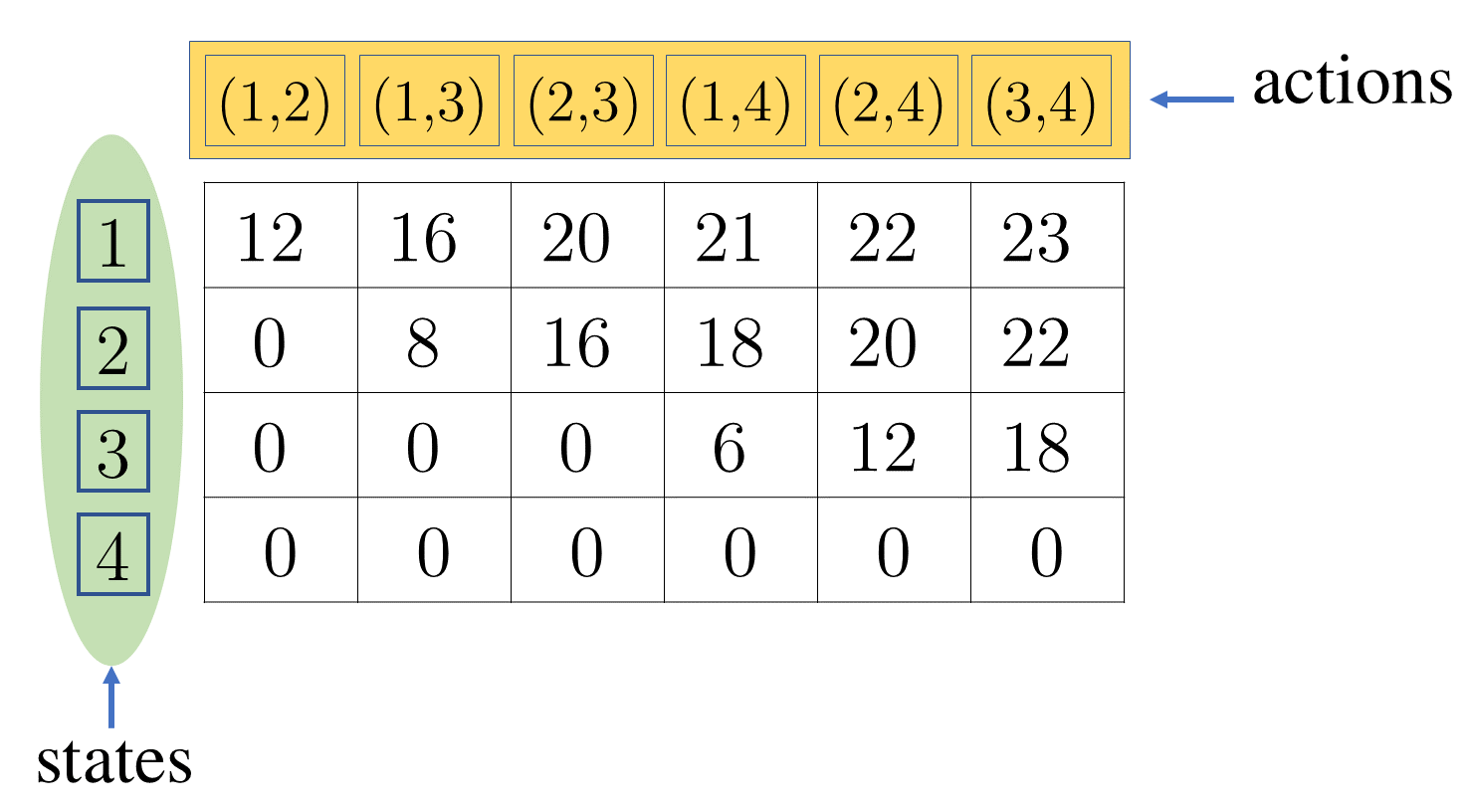}
		\caption{This piece of hardware consists of cumulative probabilities of transitions multiplied by $n!$ (Here $n=4$).}
		\label{fig:first_hardware}
	\end{figure}
	
	Since we want to consume fewer bits, we map each number to its previous number, which is a kind of relabeling (figure \ref{fig:second_hardware}).\footnote{Then the element at the address a under transposition $(i,j)$ will be $n!(a+1)!\left(\sum_{k=a+2}^{j}\frac{k-1}{k!}+\frac{i+1}{(j+1)!}[a<j]\right)$ in which $[\cdot]$ is the Iverson bracket notation \cite{Graham1989-tb}. However, it is faster to compute the numbers using dynamic programming.} However, since the highlighted row and column are actually virtual, nothing has changed so far. The difference will be in the circuit design.
	
	\begin{figure}[h]
		\centering
		\includegraphics[scale=0.5]{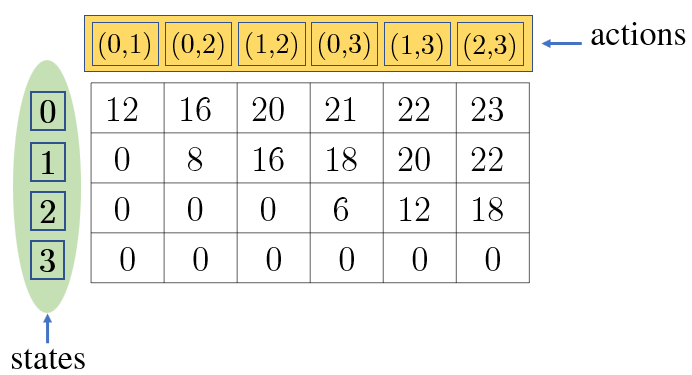}
		\caption{The label of addresses and transpositions of figure \ref{fig:first_hardware} has decreased by one, in order to consume fewer bits. This addressing is more common in hardware contexts.}
		\label{fig:second_hardware}
	\end{figure}
	
	Figure \ref{fig:undetailed_hardware} shows an abstract view of the complete hardware. Each time we want to generate a permutation, the state is set to $0$.\footnote{In fact, the input of the decoder must have a mux. However, as mentioned before, figure \ref{fig:undetailed_hardware} provides a high-level undetailed scheme.}  In each state for each column, the comparator outputs \textquotedblleft true\textquotedblright~(logical high) if the random number is greater than the corresponding number of that state and column. The gate \textquotedblleft index encoder\textquotedblright~is designed in such a way that generates the indices $i$ and $j$ corresponding to the column $(i,j)$. These indices will be passed to a true dual-port RAM (true DPRAM) in order to swap the contents of addresses $i$ and $j$. Moreover, the next state is equal to $j$.
	
	Let $s$ be the current state, and a random number $r\in\left\{1,\ldots,n!\right\}$ be generated. After comparing $r$ with the numbers at the address $s$, the first column in which the result of comparing is \textquotedblleft false\textquotedblright~determines $(i,j)$. In other words, if we denote $s$ as an array, the first $k$ such that $r>s[k]$ determines $(i,j)$. If for each $k$, $r>s[k]$, the process terminates and the permutation will be ready. Since the numbers in each row are nondecreasing, it suffices to check whether $r>s[\text{last}]$, that is, the output of the last comparator determines whether to terminate the process or not.
	
	\begin{figure}[h]
		\centering
		\includegraphics[scale=0.23]{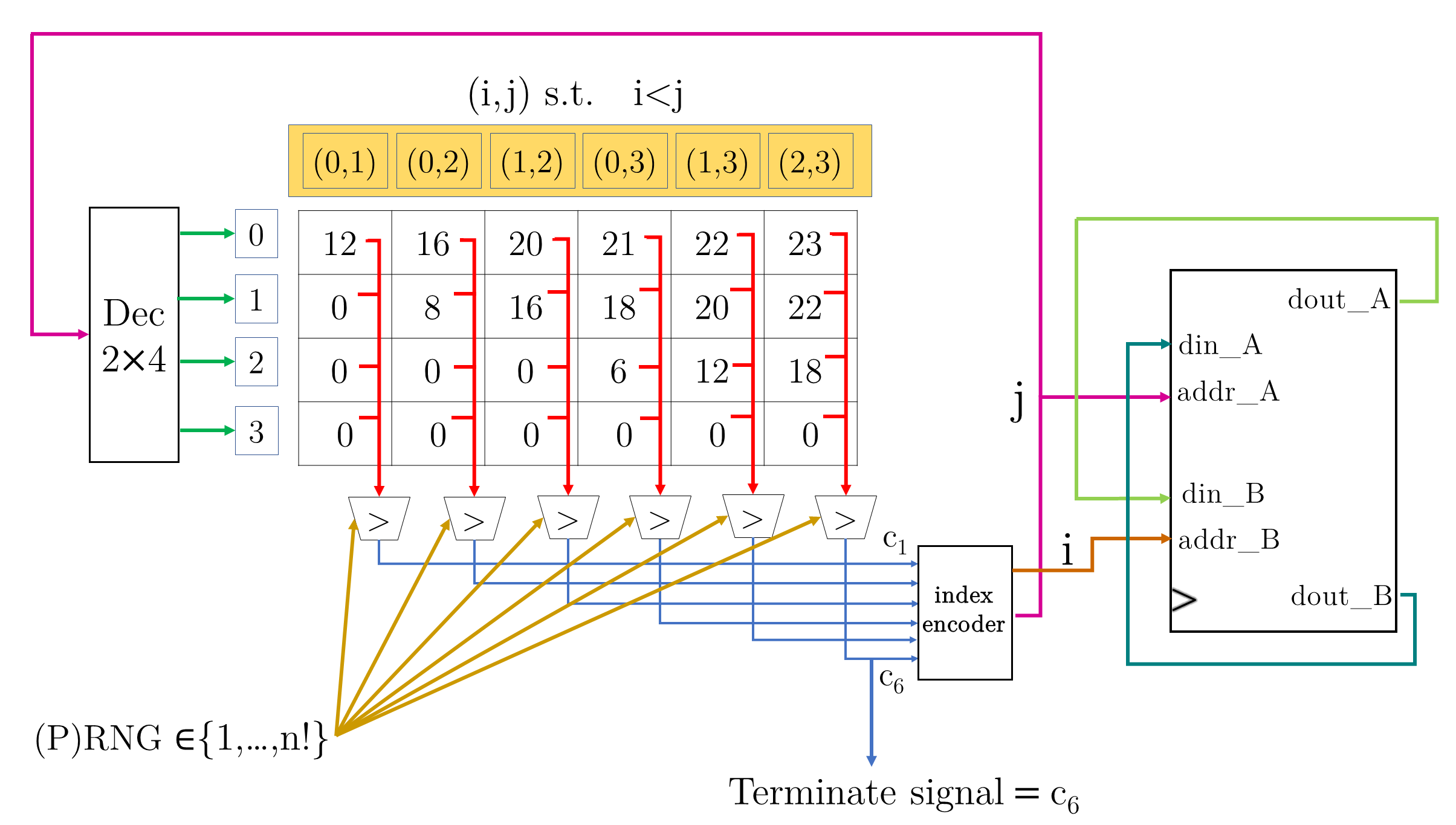}
		\caption{an undetailed view of a hardware device that makes a random permutation on a 4-element array}
		\label{fig:undetailed_hardware}
	\end{figure}
	
	\section{Performance and Complexity}
	When comparing two software or hardware algorithms, there can be used different aspects. For instance, for hardware implementations, space complexity, power, delay, PDP (power-delay product), area, fault tolerance and cost may matter. Here, we discuss the speed and complexity of the proposed method compared with the Fisher-Yates shuffle.
	
	\subsection{Comparing Performance with the Fisher-Yates Method} \label{comparing_performance}
	We can see, compared with Fisher-Yates hardware implementation, how much this hardware can decrease the expected time required to shuffle an $n$-element array for every specified $n$. For this purpose, first, we compare the expected number of required rounds each piece of hardware runs. Assuming $E_1$ and $E_2$ be the expected number of rounds needed in the Fisher-Yates and the proposed hardware, respectively\footnote{Of course, every implementation of the Fisher-Yates algorithm needs of $n-1$ rounds regardless of the resultant permutation. Hence, it needs $n-1$ rounds on average.}, we have:
	
	decrease percentage in the expected number of required rounds 
	$=\frac{|E_2-E_1|}{E_1}×100=\frac{|(n-H_n)-(n-1)|}{n-1}×100=\frac{H_n-1}{n-1}×100$
	
	Figure \ref{fig:rounds} provides a graph of percentage decrease in the expected required rounds versus the number of elements we want to permute. It shows that when $n\leq80$, using the proposed algorithm helps decrease the shuffling rounds, at least 5\%.\footnote{Assuming both pieces of hardware have the same clock frequency.}
	
	\begin{figure}[h]
		\centering
		\includegraphics[scale=0.7]{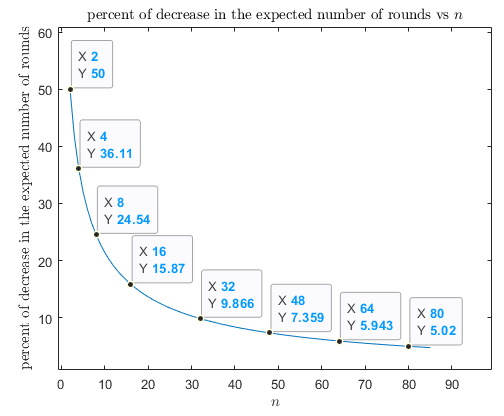}
		\caption{decrease percentage of the expected number of rounds vs. $n$ (compared with the Fisher-Yates hardware implementation)}
		\label{fig:rounds}
	\end{figure}
	
	Using another analysis, we can calculate the speed-up percentage. First, note that there are three different factors that affect the required shuffling time. The most high-level one is the number of rounds, which we discussed. The second one is the number of clock cycles each round has, and the third factor is the delay that logical gates have, which restricts the maximum possible clock frequency. Here, we do not consider the last factor because we have a high-level insight. Furthermore, it will be more significant for larger $n$'s, that is, when the circuits get larger and more complex. However, the advantage of the proposed hardware over the Fisher-Yates hardware vanishes as $n$ grows. Therefore, we do not apply the asymptotic analysis for this hardware.
	
	As a result, the most important factor after the number of rounds, is the number of clock cycles each round has. Compared with the Fisher-Yates hardware, our proposed hardware has fewer clock cycles in each round since its critical path is shorter. Because the existence of memory and swap are the same in both methods, except that in the proposed method, the memory is larger. But the Fisher-Yates hardware contains a counter as well \cite{Odom2019-co}, which makes the critical path longer. Nevertheless, since we are not going to discuss the implementation of hardware devices in this paper, we do not take this advantage into account.
	Therefore, assuming the clock frequency is the same in both implementations, we have: speed~$\propto\frac{1}{\text{time}}$; that is, if the time required to do a task multiplies by $k$, the speed of doing that task will multiply by $\frac{1}{k}$. Therefore, we have:
	
	speed-up percentage = $\frac{\text{second speed}-\text{first speed}}{\text{first speed}}×100=\left(\frac{\text{second speed}}{\text{first speed}}-1\right)×100=\left(\frac{1}{k}-1\right)×100$ 
	
	Assuming $k=\frac{n-H_n}{n-1}$ we conclude that: speed-up percentage = $\frac{H_n-1}{n-H_n}×100$. Figure \ref{fig:speed-up} provides a graph of speed-up percentage versus the number of elements we want to permute.
	
	\begin{figure}[h]
		\centering
		\includegraphics[scale=0.7]{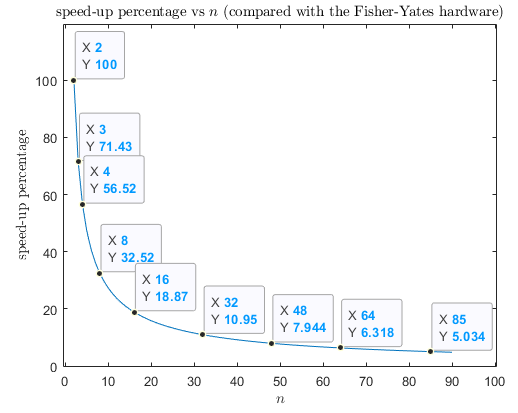}
		\caption{speed-up percentage vs. $n$ (compared with the Fisher-Yates hardware implementation)}
		\label{fig:speed-up}
	\end{figure}

	It is worth mentioning why the proposed method outperforms Fisher-Yates method. Consider the triangular scheme written below, in which $(~)$ means the identity permutation.
	
	\[
		\begin{aligned}
			(1,2) & \quad (~) \\
			(1,3) & \quad (2,3) \quad (~) \\
			(1,4) & \quad (2,4) \quad (3,4) \quad (~) \\
			\ldots \\
			(1,n) & \quad (2,n) \quad \ldots \quad (n-1,n) \quad (~)
		\end{aligned}
	\]
	Remind the remark \ref{remark_e} in theorem \ref{remarks}. As we explained, all the words accepted by the DFA are constructed by a top-down selection of exactly one element from each row. For example, $(1,2,3)=(1,2)(2,3)\overbrace{(\quad)\ldots(\quad)}^{n-3~ \text{times}}$. As a result, all the words produced by the DPFA are constructed by a bottom-up selection of exactly one element from each row. For instance, $(1,2,3)=\overbrace{(\quad)\ldots(\quad)}^{n-3~ \text{times}}(1,3)(1,2)$. This process is similar to the descending version of Fisher-Yates algorithm. For example, in case $n=4$, the complete state-space of the Fisher-Yates algorithm over time has depicted in figure \ref{fig:state-space}. At first, there is a 4-element array representing the identity permutation. In the $i$'th level $(i\geq1)$, the $(n+1-i)$th element of array obtained from the previous level will be swapped with an arbitrary element of its left side, or it remains at its previous position (Here $n=4$). Production of all permutations needs exactly $n-1$ levels. For example, the transposition $(2,3)$, which represents the array $1324$, is the result of selecting $(\quad),(2,3),$ and $(\quad)$ consecutively. However, in the proposed method, identity permutations do not waste a single level, and the expected number of levels needed to produce words will decrease. In this example, using the proposed method, the transposition $(2,3)$ will be generated in just one level; then, the procedure terminates.
	
	\begin{figure}[h]
		\centering
		\includegraphics[scale=0.23]{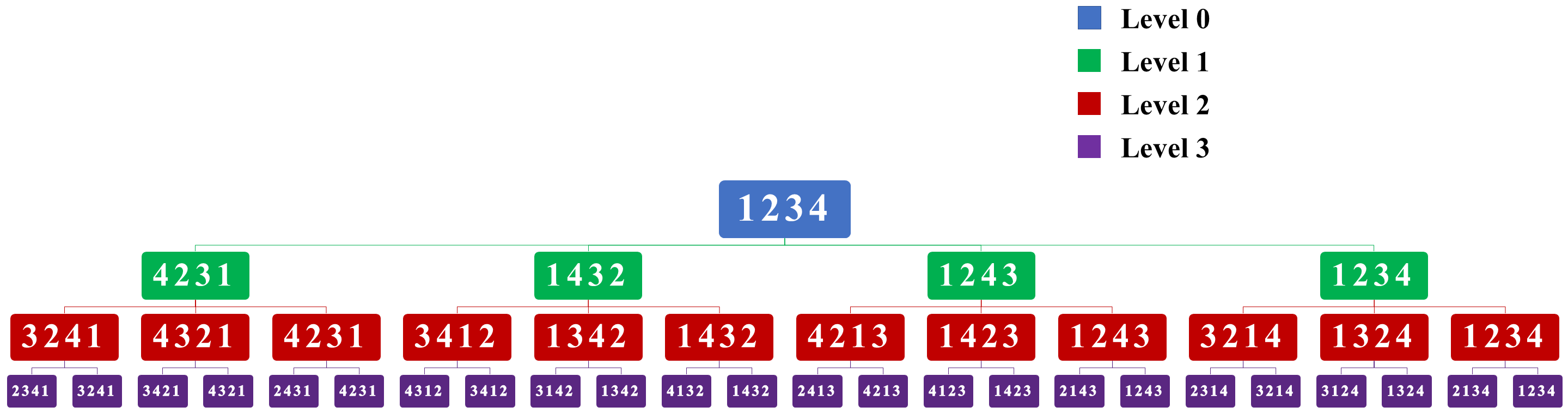}
		\caption{state-space of possible outcomes of the Fisher-Yates algorithm over time for $n=4$}
		\label{fig:state-space}
	\end{figure}
	
	\subsection{Time and Space Complexity} \label{time_and_space}
	According to corollary \ref{avg_length}, the number of random number generations and swaps to shuffle an $n$-element array in the proposed hardware is $n-H_n$. These operations are considered primitive operations, i.e., they can be done in $O(1)$ seconds. Therefore, the time complexity of the proposed method is $O(n)$. Furthermore, the space complexity of the proposed method is $O(n^4\log n)$ since the ROM has $n$ rows and $\frac{n(n-1)}{2}$ columns, and each column has the length $\lceil\log_2(n!)\rceil\in O(n\log n)$.
	
	We know that we can shuffle an array with $O(1)$ time complexity using a lookup table. That is, by storing all permutations in a ROM and generating a random number in $\left\{1,\ldots,n!\right\}$ we would access every permutation in $O(1)$ seconds. However, this method has the space complexity O$(n!×n\log n)$, which makes it impractical.
	
	Also, there are memoryless approaches that can generate a permutation having $O(1)$ time complexity, and generating a random permutation can be performed in just one clock cycle, albeit at a relatively low clock frequency. It is worth mentioning that since the nature of these designs needs similar or identical logic to be implemented a large number of times, these approaches will have a high area and delay growth as the number of inputs increases \cite{Odom2019-co}.
	
	Here, however, we can introduce a suite of hardware methods of the proposed approach to obtain different pieces of hardware and complexities. The idea is to use an arbitrary set of generators instead of transpositions. Let $H$ be a set of generators of group $G$, and $|H|=\gamma$. Remember the four steps we used to design the hardware with the transpositions as generating set. All the process will be the same for the set $H$, except we may not have the precalculated probabilities for transitions. Then we can find the transition probabilities using theorem \ref{prob_thm}. The larger the generating set is, the less the expected length of permutations will be.\footnote{Let $G_1$ and $G_2$ be the generating sets of group $S_n$ and $G_1\subsetneq G_2$. Then there exists a permutation $\sigma\in G_2\setminus G_1$. Therefore, the minimum length of presenting $\sigma$ will be shorter using $G_2$.}  Another point we must consider is to design hardware for each permutation in $H$ in order to perform them in $O(1)$ seconds.
	
	We can estimate a lower bound for the maximum length required to present all permutations using the generating set $H$. We call this number $l_{max}$. In the best case, all words from $0$ length, to the length $l_{max}$ define different elements of $G$. Hence
	\begin{equation}
		\overbrace{1+\gamma+\gamma^2+\ldots+\gamma^{l_{max}}}^{\text{number of words whose length}~\leq~l_{max}}\geq|G|
	\end{equation}
	That is
	\begin{equation} \label{identitiy}
		\frac{\gamma^{l_{max}+1}-1}{\gamma-1}\geq|G|
	\end{equation}
	This inequality helps us estimate a lower bound for $\gamma$ if we want to decrease the time complexity. For instance, if we want to lower the length of permutations of an $n$-element array to less than or equal to $\sqrt{n}\log_b(n)$, (where $b>1$), we can estimate a minimum $\gamma$, that is, the minimum cardinality the generating set must have. For this purpose, we can find the least $\gamma$ satisfying the condition $\frac{\gamma^{\lceil\sqrt{n}\log_b(n)\rceil+1}-1}{\gamma-1}\geq n!$. Figure \ref{fig:gamma_lower_bound} illustrates the lower bound of $\gamma$ for $n\leq20$ and $b=e$ (Euler's number).
		
	\begin{figure}[h]
		\centering
		\includegraphics[scale=0.5]{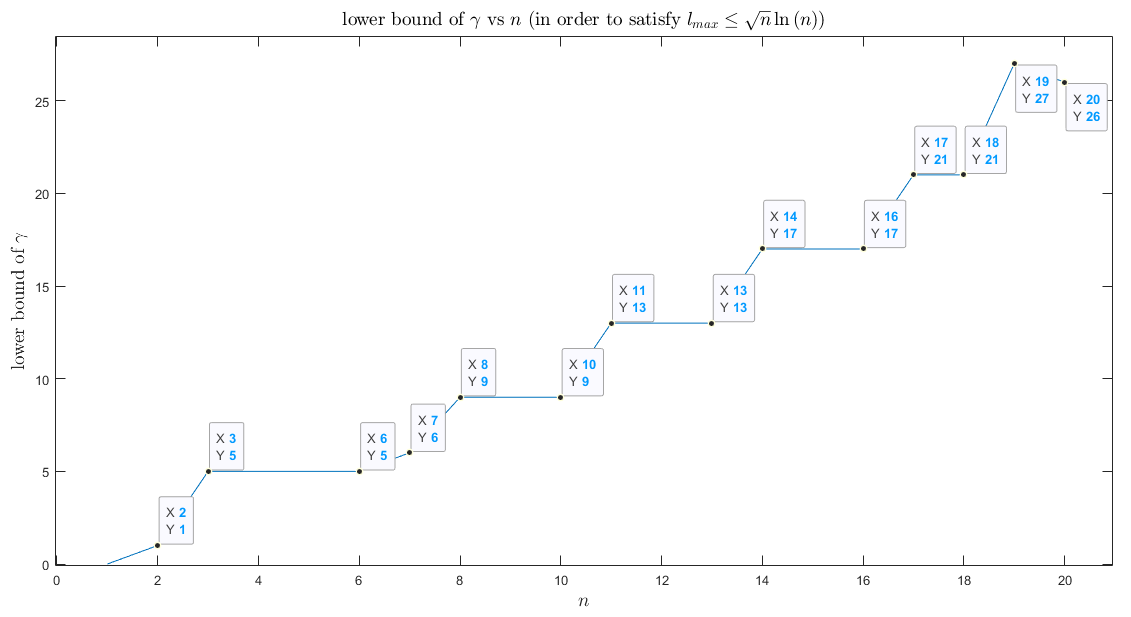}
		\caption{Lower bound of $\gamma$ in order to satisfy $l_{max}\leq\sqrt{n}\ln(n)$. For instance, point $(17,21)$  on the graph means that in order to generate all permutations on a 17-element set in less than $\sqrt{17}\ln(17)$ clock cycles, the generating set G must have at least 21 elements.}
		\label{fig:gamma_lower_bound}
	\end{figure}

	For sufficiently large $n$'s, an asymptotic analysis can be helpful. Inequality \ref{identitiy} holds if and only if $\gamma^{l_{max}+1}\geq(\gamma-1)n!+1$. That is
	\begin{equation} \label{l_max_lower_bound}
		l_{max} \geq \frac{\log_b((\gamma-1)n!+1)-\log_b(\gamma)}{\log_b(\gamma)}
	\end{equation}
	In the appendix, we have proved that for all $n\geq5$ and $\gamma\geq2$, we have
	\begin{equation} \label{appendix_ineq}
		\log_b((\gamma-1)n!+1)-\log_b(\gamma)>\frac{n}{2}\log_b(n)
	\end{equation}
	Therefore, using inequality \eqref{l_max_lower_bound}, we conclude that $l_{max} \geq \frac{\log_b((\gamma-1)n!+1)-\log_b(\gamma)}{\log_b(\gamma)}>\frac{\frac{n}{2}\log_b(n)}{\log_b(\gamma)}$. For sufficiently large $n$'s,
	\begin{equation} \label{l_max_LB}
		l_{max}>\frac{n\log_b(n)}{2\log_b(\gamma)}
	\end{equation}
	gives us an approximate, albeit sometimes optimistic, lower bound for the maximum length the permutations will have.
	
	For instance, in the case that $G$ is the set of transpositions, $\gamma = \left(\frac{n}{2}\right)$. Then for $n \geq 5$, $l_{\text{max}} > \frac{n \log_b(n)}{2 \log_b\left(\frac{n}{2}\right)} > \frac{n \log_b(n)}{2 \log_b(n^2)} = \frac{n}{4}$; i.e., the proposed hardware can't decrease the maximum length of permutations to $\frac{n}{4}$ or less.
	
	Nevertheless, the inequality \ref{l_max_LB} also helps us estimate a lower bound for $\gamma$ if we want to decrease the time complexity. According to inequality \ref{l_max_LB}, if we want to satisfy the condition $l_{\text{max}} \leq L$, it leads to $\frac{\frac{n}{2} \log_b(n)}{\log_b(\gamma)} < L$. Hence $\log_b(\gamma) > \frac{\frac{n}{2} \log_b(n)}{L}$. For instance, if $l_{\text{max}} \leq \sqrt{n} \log_b(n)$ where $n \geq 5$, we conclude that $\log_b(\gamma) > \frac{\frac{n}{2} \log_b(n)}{\sqrt{n} \log_b(n)} = \frac{\sqrt{n}}{2}$; i.e., $\gamma > b^{\frac{\sqrt{n}}{2}}$. Figure \ref{fig:lower_bounds} provides a comparison between the minimum $\gamma$ obtained from inequalities \ref{identitiy} and \ref{l_max_LB} for $n \leq 100$ and $b=e$. As you can see, inequality \ref{l_max_LB} provides a necessary condition for $\gamma$.
	
	\begin{figure}[h]
		\centering
		\includegraphics[scale=0.5]{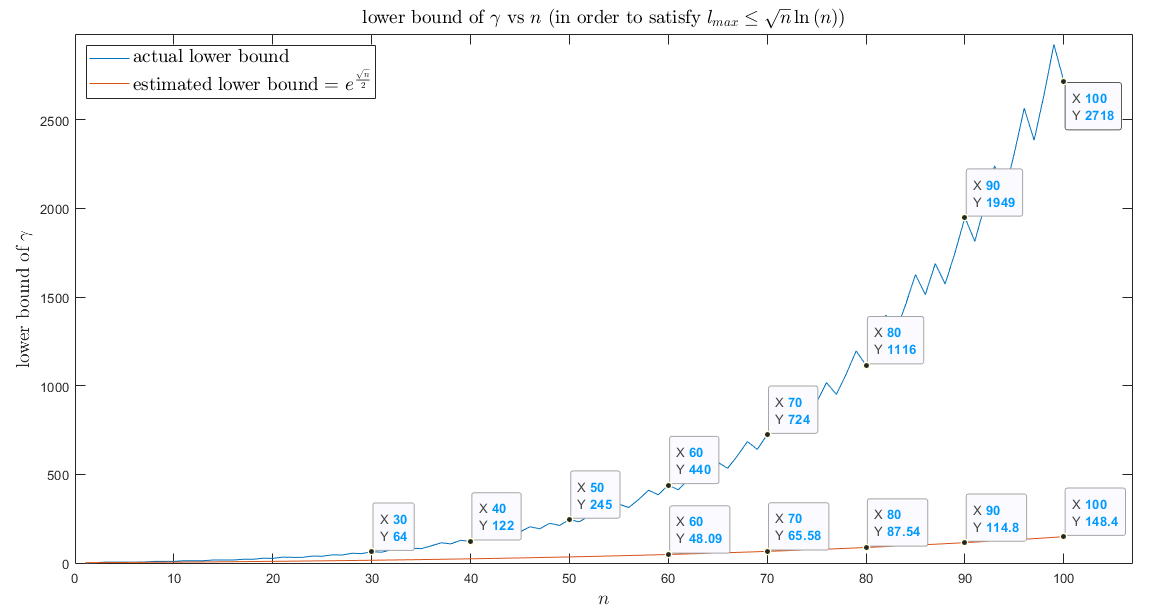}
		\caption{Actual and estimated lower bound of $\gamma$ in order to satisfy $l_{max}\leq\sqrt(n)\ln(n)$. This figure also shows that the estimated lower bound ($e^\frac{\sqrt{n}}{2}$) provides just a necessary but not sufficient condition for $\gamma$.}
		\label{fig:lower_bounds}
	\end{figure}
	
	The last remark we should consider is that in order to achieve a complete comparison and understanding of how the hardware methods work, it is crucial to implement (or at least simulate) them. One of the main hardware design principles is that \textquotedblleft Smaller is faster,\textquotedblright~which means the more complex the hardware, the slower it gets; since \textquotedblleft it takes electronic signals longer when they must travel farther.\textquotedblright~However, guidelines like this are not absolute. For instance, \textquotedblleft31 registers may not be faster than 32\textquotedblright\cite{Patterson2013-uy}. As a result, dedicating extra memory in order to lower the primitive operations does not always guarantee faster implementations since we may have to lower the clock frequency. This leads to another design principle: \textquotedblleft Good design demands good compromises\textquotedblright \cite{Patterson2013-uy}.

	\section{Conclusion}
	Random permutation generation (RPG) has a wide range of applications in computer science. In many applications, the size of the array we want to shuffle is fixed, and the shuffling process is done frequently. These applications made us try to speed up the procedure of RPG for arrays of a specific length. The well-known algorithm for this purpose is the Fisher-Yates algorithm. However, this algorithm sometimes wastes some clock cycles to do nothing. Our proposed hardware algorithm tries to avoid these wasted times.

	First of all, we provided a theoretical background. It was made up of five different insights: algebraic (when dealing with permutations as a group), language-theoretic (which provided an interface between algebraic and automatic insights), automatic (which provided a compact structure to store the information), machinelike (which was the closest insight to hardware design), and graphical (which was the interface between automatic and machinelike insights and helped us lower the amount of abstraction). In theorems \ref{minimal_DFA} and \ref{remarks}, we proved the minimality of the DFA and the length of words that the transducer produces. As a result, we have used the optimal solution in order to obtain an optimal method with respect to the number of needed transpositions.

	In section \ref{Hardware_Design}, we introduced a hardware design based on the theoretical background and proofs provided formerly. section \ref{comparing_performance} explained that why and how much our proposed method speeds up the RPG process compared with the Fisher-Yates algorithm. As we saw, the advantage of the proposed method would vanish as $n$ grew. However, for small $n$'s it has a significant speed-up. For $n\leq32$, the speed-up is at least 10.95\%, and for $n\leq85$, it is at least 5\%. It is the speed or hardware priorities that determine for which $n$'s it is cost-effective to use the proposed method. For example, for $n\leq605$, the speed-up is at least 1\%, but this amount may be too small for some applications. On the other hand, it may be significant in a data center. As we saw in section \ref{time_and_space}, our proposed method did not improve the time complexity. However, we generalized our method to contain sublinear time complexities too.

	We think that realizing the process we used was the most important point in this paper, which can pave the way for further research. There can be much research in the field of solving permutation puzzles optimally or representing algebraic problems as automatic ones or vice versa. Moreover, the implementation of proposed hardware or the hardware pieces of the generalized method is of great value since the implementation always involves many challenges and compromises. This paper was focused more on the theory. We hope we can complete its practical part through future research.

	
	%

	\appendices
	\section{Proof of the Inequality {\eqref{appendix_ineq}}} \label{app:appendixA}
	As needed in section \ref{time_and_space}, we want to prove that for all natural numbers $n \geq 5$ and $\gamma \geq 2$, and for all real numbers $b>1$, inequality $\log_b((\gamma-1)n!+1)-\log_b(\gamma)>\frac{n}{2}\log_b(n)$ holds. We prove this proposition through the following lemmas.
	
	\begin{lemma}
		For all natural numbers $n$ and $k$ such that $0 \leq k < n$, $\binom{n-1}{k} \leq n^k$.
	\end{lemma}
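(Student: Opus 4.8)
The plan is to expand the binomial coefficient as a product and bound each factor by the crudest possible estimate. First I would treat the main case $1 \le k < n$ and write
\[
\binom{n-1}{k} \;=\; \frac{(n-1)!}{k!\,(n-1-k)!} \;=\; \frac{(n-1)(n-2)\cdots(n-k)}{k!}.
\]
Since $k < n$, the factor $n-1-k$ is a nonnegative integer, so the expression is well-defined, and the numerator is a product of exactly $k$ integers $n-1, n-2, \ldots, n-k$, each of which is at most $n$ (in fact strictly less than $n$). Hence the numerator is bounded above by $n^k$, and because $k! \ge 1$, dividing by $k!$ can only decrease the value; therefore $\binom{n-1}{k} \le n^k$. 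For the remaining degenerate case $k = 0$, both sides equal $1$, so the inequality holds with equality. This completes the argument.

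If one prefers to avoid the product form, an equally elementary alternative is induction on $k$ with $n$ held fixed: the base case $k = 0$ is $1 \le 1$, and for the inductive step, assuming $k+1 < n$ so that the coefficient is defined, one uses the identity $\binom{n-1}{k+1} = \binom{n-1}{k}\cdot\frac{n-1-k}{k+1}$ together with $\frac{n-1-k}{k+1} \le n-1 \le n$ and the inductive hypothesis $\binom{n-1}{k} \le n^k$ to conclude $\binom{n-1}{k+1} \le n^{k+1}$. Either route suffices for the downstream estimate used in Appendix~\ref{app:appendixA}.

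There is essentially no real obstacle in this lemma; the only two points deserving a moment's care are the degenerate case $k = 0$ (where the numerator is an empty product and $\binom{n-1}{0} = 1$) and the observation that no sharper lower bound on $k!$ than $k! \ge 1$ is needed, since the factor-by-factor bound $n - i \le n$ on the numerator already delivers the claimed inequality. The estimate is deliberately wasteful but is exactly the form needed to feed into the later lemmas establishing \eqref{appendix_ineq}.
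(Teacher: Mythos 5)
Your main argument is correct and is essentially the paper's proof: both expand $\binom{n-1}{k}$ as the product $\frac{(n-1)(n-2)\cdots(n-k)}{k!}$ and bound it factor-by-factor by $n^k$, with the $k=0$ case handled separately. No changes needed.
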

	\begin{proof}
		When $k=0$, the statement is obvious, since $\binom{n-1}{0}=1 \leq 1=n^0$. Suppose $k \neq 0$. Then
		\[
		\binom{n-1}{k} = \frac{(n-1)...(n-k)}{k!} = \frac{(n-1)}{1} \times \frac{(n-2)}{2} \times ... \times \frac{(n-k)}{k} < n^k
		\]
	\end{proof}
	
	\begin{corollary} \label{cor_multiplication}
		For all natural numbers $n$ and $k$ such that $0 \leq k < n$, $\binom{n-1}{k} n^{n-1-k} \leq n^{n-1}$
	\end{corollary}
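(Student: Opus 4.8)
The plan is to derive this corollary directly from the preceding lemma, since it is essentially just a scaling of that inequality. The lemma already gives $\binom{n-1}{k} \leq n^k$ for all natural numbers $n,k$ with $0 \leq k < n$, so the only thing left to do is to multiply both sides by the factor $n^{n-1-k}$ and simplify the right-hand side.

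First I would check that the exponent $n-1-k$ is a nonnegative integer: from the hypothesis $k < n$ we get $k \leq n-1$, hence $n-1-k \geq 0$, so $n^{n-1-k}$ is a genuine (positive) natural number. This guarantees that multiplying the inequality of the lemma by $n^{n-1-k}$ preserves the direction of the inequality.

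Carrying this out: from $\binom{n-1}{k} \leq n^k$ we obtain
\[
\binom{n-1}{k}\, n^{n-1-k} \leq n^k \cdot n^{n-1-k} = n^{k + (n-1-k)} = n^{n-1},
\]
which is exactly the claimed bound. The hardest part here is genuinely only bookkeeping — making sure the exponent is legitimate and the exponent arithmetic $k+(n-1-k)=n-1$ is stated cleanly — so no substantive obstacle is expected; the corollary is an immediate consequence of the lemma.
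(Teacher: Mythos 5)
Your proof is correct and matches the paper's intended argument: the corollary follows from the preceding lemma by multiplying both sides by the nonnegative power $n^{n-1-k}$ and simplifying the exponents. The paper leaves this step implicit, and your write-up fills it in exactly as intended.
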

	
	\begin{lemma} \label{beautiful_lemma}
		For all natural numbers $n$, $n^n \geq (n+1)^{n-1}$
	\end{lemma}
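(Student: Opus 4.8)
The plan is to divide both sides by $(n+1)^{n-1}$ and reduce the claim to the equivalent inequality
\[
n \;\geq\; \left(1 + \frac{1}{n}\right)^{n-1},
\]
which comes from the identity $n^n/(n+1)^{n-1} = n\cdot\bigl(n/(n+1)\bigr)^{n-1} = n\big/\left(1+\tfrac{1}{n}\right)^{n-1}$. Since $(n+1)^{n-1}>0$, this reduction is valid for every $n$, so it suffices to show that the right-hand side never exceeds $n$.

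First I would dispatch the small cases directly: for $n=1$ the original inequality reads $1\geq 1$, and for $n=2$ it reads $4\geq 3$, both true. For $n\geq 3$ I would use the chain
\[
\left(1+\frac{1}{n}\right)^{n-1} \;\leq\; \left(1+\frac{1}{n}\right)^{n} \;<\; 3 \;\leq\; n,
\]
where the first step is monotonicity in the exponent (the base exceeds $1$), the last step is just $n\geq 3$, and the middle step is the standard bound $(1+1/n)^n < e < 3$. If one prefers a self-contained argument, the middle inequality can be re-derived in one line from the binomial theorem: $(1+1/n)^n = \sum_{k=0}^n \binom{n}{k} n^{-k} \leq \sum_{k=0}^n \frac{1}{k!} \leq 1 + \sum_{k\geq 1} 2^{-(k-1)} = 3$, using $n(n-1)\cdots(n-k+1)\leq n^k$ and $k!\geq 2^{k-1}$. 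Combining the two cases yields $n \geq (1+1/n)^{n-1}$ for every natural number $n$, and multiplying back through by $(n+1)^{n-1}$ recovers $n^n \geq (n+1)^{n-1}$.

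I do not expect any genuine obstacle here; the only points needing a little care are making sure the base cases $n=1,2$ are handled separately (the estimate $3\leq n$ only activates at $n=3$) and confirming that the reduction to $n\geq(1+1/n)^{n-1}$ is legitimate for all $n$. An alternative route is induction on $n$, using $(1+1/(n+1))^n < 3 \leq n+1$ in the inductive step, but the direct argument above is shorter and avoids carrying a hypothesis.
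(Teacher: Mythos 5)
Your proof is correct, but it takes a different route from the paper's. The paper expands $(n+1)^{n-1}=\sum_{i=0}^{n-1}\binom{n-1}{i}n^{n-1-i}$ and bounds each of the $n$ terms by $n^{n-1}$ (via the preceding lemma $\binom{n-1}{k}\leq n^k$), obtaining $(n+1)^{n-1}\leq n\cdot n^{n-1}=n^n$ in a single uniform step with no case analysis. You instead divide through and reduce to $n\geq\left(1+\tfrac1n\right)^{n-1}$, which forces you to treat $n=1,2$ separately and then invoke the bound $\left(1+\tfrac1n\right)^{n}<3$ for $n\geq3$. Both arguments are valid; your reduction makes the inequality transparent as a statement about the boundedness of $\left(1+\tfrac1n\right)^{n}$ by $e$, and in fact shows the inequality is quite slack for large $n$, whereas the paper's version is shorter, avoids the case split, and reuses machinery (the binomial-coefficient bound) it has already set up for the appendix. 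Amusingly, your "self-contained" derivation of the middle inequality via $\binom{n}{k}n^{-k}\leq\tfrac1{k!}$ is essentially the same binomial estimate the paper applies directly, so the two proofs are cousins at heart. One minor point of care you already flagged correctly: the chain $\left(1+\tfrac1n\right)^{n-1}\leq\left(1+\tfrac1n\right)^{n}<3\leq n$ only activates at $n=3$, and your explicit checks of $n=1,2$ close that gap.
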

	\begin{proof}
		Using binomial expansion and corollary \ref{cor_multiplication}, the statement will be proved.
		\[
		(n+1)^{n-1} = \sum_{i=0}^{n-1} \binom{n-1}{i} n^{n-1-i} \leq \sum_{i=0}^{n-1} n^{n-1} = n(n^{n-1}) = n^n
		\]
	\end{proof}
	
	\begin{lemma} \label{n_geq_5}
		For all natural numbers $n \geq 5$, $\frac{n!}{2} > n^{\frac{n}{2}}$.
	\end{lemma}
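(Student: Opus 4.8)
The plan is to prove Lemma~\ref{n_geq_5} by induction on $n$, with base case $n=5$, using Lemma~\ref{beautiful_lemma} ($n^n \geq (n+1)^{n-1}$) to power the inductive step — that lemma appears to have been set up precisely for this purpose, so most of the work is already done.

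For the base case $n=5$, I would simply observe that $\frac{5!}{2} = 60$ while $5^{5/2} = 25\sqrt{5}$, and confirm $60 > 25\sqrt{5}$ by squaring both sides: $3600 > 3125$. For the inductive step, assume $\frac{n!}{2} > n^{n/2}$ for some $n \geq 5$. Then
\[
\frac{(n+1)!}{2} = (n+1)\cdot\frac{n!}{2} > (n+1)\, n^{n/2}.
\]
By Lemma~\ref{beautiful_lemma} we have $n^n \geq (n+1)^{n-1}$, and taking (nonnegative) square roots gives $n^{n/2} \geq (n+1)^{(n-1)/2}$. Substituting this into the inequality above yields
\[
(n+1)\, n^{n/2} \geq (n+1)\,(n+1)^{(n-1)/2} = (n+1)^{(n+1)/2},
\]
so $\frac{(n+1)!}{2} > (n+1)^{(n+1)/2}$, which completes the induction.

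I do not expect a genuine obstacle here: the substantive content is Lemma~\ref{beautiful_lemma}, which is already available. The only points requiring care are keeping the half-integer exponents consistent (deciding cleanly whether to square or to take square roots at each step) and checking the base case numerically; both are routine.
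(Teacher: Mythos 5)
Your proof is correct and follows essentially the same route as the paper: the same base case $n=5$ and the same inductive step, reducing $n^{n/2}(n+1) \geq (n+1)^{(n+1)/2}$ to Lemma~\ref{beautiful_lemma} (you take square roots where the paper squares, which is an immaterial difference). No gaps.
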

	
	\begin{proof}
		We prove the statement by induction. For $n=5$, $\frac{n!}{2}=60 > 55.9 \approx n^{\frac{n}{2}}$. Now suppose the statement is true for $n=n_0$; i.e., $\frac{n_0!}{2} > n_0^{\frac{n_0}{2}}$. By multiplying both sides by $n_0+1$ we conclude $\frac{\left(n_0+1\right)!}{2} > n_0^{\frac{n_0}{2}}(n_0+1)$. It suffices to show the RHS is greater than or equal to $(n_0+1)^{\frac{n_0+1}{2}}$. This is true because
		\[n_0^{\frac{n_0}{2}}(n_0+1) \geq (n_0+1)^{\frac{n_0+1}{2}} \iff n_0^{\frac{n_0}{2}} \geq (n_0+1)^{\frac{n_0-1}{2}} \iff n_0^{n_0} \geq (n_0+1)^{n_0-1}\]
		The last inequality is just what Lemma \ref{beautiful_lemma} says.
	\end{proof}
	
	\begin{corollary} \label{semi_final}
		For all natural numbers $n \geq 5$, $\frac{n!+1}{2} > n^{\frac{n}{2}}$
	\end{corollary}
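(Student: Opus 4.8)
The plan is to derive this immediately from Lemma \ref{n_geq_5}, which has already done all the real work. Lemma \ref{n_geq_5} asserts that $\frac{n!}{2} > n^{\frac{n}{2}}$ for every natural number $n \geq 5$. Since adding $1$ to the numerator strictly increases the fraction, we have $\frac{n!+1}{2} = \frac{n!}{2} + \frac{1}{2} > \frac{n!}{2}$, and chaining this with the lemma gives $\frac{n!+1}{2} > \frac{n!}{2} > n^{\frac{n}{2}}$, which is exactly the claim. So the entire argument is a one-line transitivity step; no induction or estimation is needed.

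The only thing worth remarking on is \emph{why} this slightly weaker-looking statement is singled out as a corollary: the term $\frac{n!+1}{2}$ (rather than $\frac{n!}{2}$) is the quantity that will actually appear when we unwind $\log_b((\gamma-1)n!+1)$ in the main appendix inequality \eqref{appendix_ineq} at $\gamma = 2$, so it is convenient to package the bound in this form now. One could instead re-run the induction of Lemma \ref{n_geq_5} with $\frac{n!+1}{2}$ in place of $\frac{n!}{2}$, but that would be strictly more work for a strictly weaker conclusion, so I would not do that.

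There is essentially no obstacle here. The only point requiring a moment's care is confirming that the inequality in Lemma \ref{n_geq_5} is \emph{strict}, so that the added $\frac{1}{2}$ is not even needed to preserve strictness — but since Lemma \ref{n_geq_5} is already stated with a strict inequality, the corollary follows verbatim. I would simply write: ``By Lemma \ref{n_geq_5}, $\frac{n!+1}{2} > \frac{n!}{2} > n^{\frac{n}{2}}$.''
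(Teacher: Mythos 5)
Your argument is correct and is exactly the intended derivation: the paper leaves this corollary unproved precisely because it follows from Lemma \ref{n_geq_5} via the one-line chain $\frac{n!+1}{2} > \frac{n!}{2} > n^{\frac{n}{2}}$. Nothing further is needed.
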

	
	\begin{theorem}
		For all natural numbers $n \geq 5$ and $\gamma \geq 2$, and for all real numbers $b>1$,  the inequality $\log_b((\gamma-1)n!+1)-\log_b(\gamma) > \frac{n}{2} \log_b(n)$ holds.
	\end{theorem}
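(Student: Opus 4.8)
The plan is to collapse the left-hand side into a single logarithm and then exploit monotonicity of $\log_b$ (valid since $b>1$) to reduce the whole statement to an elementary inequality between the arguments. Concretely, $\log_b((\gamma-1)n!+1)-\log_b(\gamma)=\log_b\!\left(\frac{(\gamma-1)n!+1}{\gamma}\right)$ and $\frac{n}{2}\log_b(n)=\log_b\!\left(n^{n/2}\right)$, so it suffices to prove the single inequality
\[
\frac{(\gamma-1)n!+1}{\gamma}>n^{n/2}\qquad\text{for all }n\ge 5,\ \gamma\ge 2 .
\]

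The next step is to remove the dependence on $\gamma$ by a monotonicity argument. Writing the left-hand side as
\[
\frac{(\gamma-1)n!+1}{\gamma}=n!-\frac{n!-1}{\gamma},
\]
and noting that $n!-1>0$, this expression is nondecreasing in $\gamma$ on $\gamma\ge 2$; hence its minimum over $\gamma\ge 2$ is attained at $\gamma=2$, where it equals $n!-\frac{n!-1}{2}=\frac{n!+1}{2}$. Therefore for every $\gamma\ge 2$ we have $\frac{(\gamma-1)n!+1}{\gamma}\ge\frac{n!+1}{2}$, and it only remains to show $\frac{n!+1}{2}>n^{n/2}$ for $n\ge 5$ — which is exactly Corollary \ref{semi_final}.

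Chaining these together: $\frac{(\gamma-1)n!+1}{\gamma}\ge\frac{n!+1}{2}>n^{n/2}$, and applying the increasing function $\log_b$ gives $\log_b\!\left(\frac{(\gamma-1)n!+1}{\gamma}\right)>\log_b\!\left(n^{n/2}\right)=\frac{n}{2}\log_b(n)$, which is the claim. I do not expect a genuine obstacle here: all the real work has already been done in Lemmas \ref{beautiful_lemma} and \ref{n_geq_5} and Corollary \ref{semi_final}, and the only mild subtlety is recognizing that the worst case is $\gamma=2$ so that the $\gamma$-parameter can be eliminated before invoking Corollary \ref{semi_final}. The base case $n=5$ is already absorbed into that corollary, so no separate verification is needed.
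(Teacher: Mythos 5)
Your proof is correct and follows essentially the same route as the paper: both arguments reduce to the worst case $\gamma=2$ by monotonicity in $\gamma$ and then invoke Corollary \ref{semi_final}. The only cosmetic difference is that you track the quotient $n!-\frac{n!-1}{\gamma}$, whose monotonicity needs only $n!-1>0$, whereas the paper tracks the difference $f(n,\gamma)=(\gamma-1)n!+1-\gamma n^{\frac{n}{2}}$ and must reuse Lemma \ref{n_geq_5} to see that its increment $n!-n^{\frac{n}{2}}$ is positive.
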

	
	\begin{proof}
		Consider the bivariate function $f: \mathbb{N}^{\geq 5} \times \mathbb{N}^{\geq 2} \rightarrow \mathbb{R}$ with the function rule $f(n,\gamma) = (\gamma-1)n!+1-\gamma n^{\frac{n}{2}}$. We prove that for all $n$ and $\gamma$ in the domain of $f$, $f(n,\gamma) > 0$. First note that $f(n,2) = n!+1-2n^{\frac{n}{2}}$ and based on Corollary \ref{semi_final}, $f(n,2) > 0$. Furthermore, $f(n,\gamma+1)-f(n,\gamma) = n!-n^{\frac{n}{2}} = (n!-2n^{\frac{n}{2}})+n^{\frac{n}{2}}$ which is positive according to Lemma \ref{n_geq_5}. Therefore, the function $f$ is ascending with respect to $\gamma$. Hence, for all $\gamma > 2$, $f(n,\gamma) > f(n,2) > 0$. Finally, $(\gamma-1)n!+1-\gamma n^{\frac{n}{2}} > 0$ implies $\log_b((\gamma-1)n!+1)-\log_b(\gamma) > \frac{n}{2} \log_b(n)$.
	\end{proof}

	\ifCLASSOPTIONcaptionsoff
	\newpage
	\fi

	
	
	\bibliographystyle{IEEEtran}
	\clearpage
	
	%
	
	%
	
	
	

\end{document}